\newtheorem{theorem}{Theorem}[section]
\newtheorem{definition}[theorem]{Definition}
\newtheorem{proposition}[theorem]{Proposition}
\newtheorem{corollary}[theorem]{Corollary}
\newtheorem{lemma}[theorem]{Lemma}
\newtheorem{remark}[theorem]{Remark}
\newcommand{\id}{\mathrm{id}}
\newcommand{\tr}{\mathrm{tr}}
\newcommand{\Tr}{\mathrm{Tr}}
\newcommand{\I}{\mathrm{I}}
\newcommand{\CC}{\mathbb{C}}
\newcommand{\Wg}{\mathrm{Wg}}
\newcommand{\E}{\mathbb{E}}
\newcommand{\Moeb}{\mathrm{M{\ddot o}b}}
\renewcommand{\epsilon}{\varepsilon}
\begin{document}

\title{On the spectral gap of random quantum channels}

\author{Carlos E. Gonz\'alez-Guill\'en}
\email{carlos.gguillen@upm.es}
\address{Departamento de Matem\'atica Aplicada a la Ingenier\'ia Industrial, Universidad Polit\'ecnica de Madrid, Madrid, Spain, \& IMI, Universidad Complutense de Madrid, Madrid, Spain.}

\author{Marius Junge}
\email{junge@math.uiuc.edu}
\address{Department of Mathematics, University of Illinois, Urbana, IL 61801, USA}

\author{Ion Nechita}
\email{nechita@irsamc.ups-tlse.fr}
\address{Laboratoire de Physique Th\'eorique, Universit\'e de Toulouse, CNRS, UPS, France}

\subjclass[2000]{}
\keywords{random quantum channel, quantum expander, spectral gap, matrix product sate, principle of maximum entropy}

\date{\today}

\begin{abstract}
In this work, we prove a lower bound on the difference between the first and second singular values of quantum channels induced by random isometries, that is tight in the scaling of the number of Kraus operators. This allows us to give an upper bound on the difference between the first and second largest (in modulus) eigenvalues of random channels with same large input and output dimensions for finite number of Kraus operators $k\geq 169$. Moreover, we show that these random quantum channels are quantum expanders, answering a question posed in \cite{hastings2007random}. As an 
application, we show that ground states of infinite 1D spin chains, which are well-approximated by matrix product states, fulfill a principle of maximum entropy.
\end{abstract}

\maketitle

\tableofcontents

\section{Introduction}

Quantum channels are the most general linear transformations quantum systems can undergo. As such, the study of their properties is of central importance in Quantum Information Theory \cite{watrous2018theory}, much as Markov chains are central objects in probability theory and Shannon theory. 

During the last 15 years, there has been an increasing interest in the study of \emph{generic quantum channels}, that is random quantum channels having natural probability distributions. Such an approach, of paramount importance in classical information theory (where, e.g., random channel coding plays a central role) has been pursued in the field of quantum information, in parallel with the study of other generic objects, such as density matrices \cite{collins2016random}. 

Several models of random quantum channels have been considered in the literature. Channels where the Kraus operators are taken to be random unitary matrices were first considered in \cite{hayden2004randomizing} where it was shown (and later improve in \cite{aubrun2009almost}) that they are $\epsilon$-randomizing maps. The same channels were considered in \cite{hayden2005multiparty} as part of a data hiding scheme, in \cite{harrow2004superdense} for superdense coding, and \cite{dupuis2013locking} to study locking classical information. All this application have in common that the number of Kraus operators is increasing with the dimension of the system. However, in the case when  the number of Kraus operators is kept fixed, Hastings showed that random quantum channels are quantum expanders \cite{hastings2007random}, providing the quantum counterpart of Friedman's classical result that random regular graphs are (nearly) Ramanujan expanders \cite{friedman2008proof}. Later, Hastings used the same model of random channels to give the first counterexamples of the additivity of the minimum output entropy \cite{hastings2009superadditivity}.

Another model of random quantum channels stems from the Stinespring dilation theorem. Indeed, choosing a Haar-distributed random isometry induces a probability distribution on the set of quantum channels. The main difference from this model and the previous one is that in the random isometry case, channels are no longer unital (with probability one). Such models were first considered in \cite{hayden2008counterexamples} to tackle additivity related questions, and have been shown to also provide additivity counterexamples in \cite{fukuda2010entanglement}.

\medskip

In this work, we study the singular value and spectral gap of random quantum channels induced by random isometries which have a fixed number of Kraus operators, but very large input and output dimensions. Besides the fundamental importance of answering such questions, we motivate our work with two applications. First, we show that random quantum channels are quantum expanders, i.e.~they have a lower bounded spectral (eigenvalue) gap for fixed number of Kraus operators. Our results are the counterpart of Hastings' first example of such random channels \cite{hastings2007random}, in the framework of random mixed unitary channels. Our bounds on the gap match (up to constants) Hastings' bounds, as well as bounds obtained by Pisier in \cite{pisier2014quantum}. To be more precise, we show that for the number of Kraus operators $k\geq 169$ and input and output dimensions equal and large, random quantum channels $\Phi_n$ are generalized quantum expanders, where the invariant state does not need to be maximally mixed, but an state close to it, having a large von Neumann entropy. As a second application of our results, we prove that reduced density matrices of infinite translationally-invariant matrix product states satisfy a maximum entropy principle.

\medskip

The paper is organized as follows. In Section \ref{sec:random-quantum-channels}, we introduce in detail the model of random quantum channels we consider and the precise asymptotical regime in which we study them. Section \ref{sec:Weingarten} contains a brief review of Weingarten calculus needed for random matrix computations. Sections \ref{sec:LB} and \ref{sec:UB} are the main technical core of the paper, containing lower and, respectively, upper bounds on the singular values of the sequence of super-operators corresponding to the random quantum channels under consideration; these results are put together in  Section \ref{sec:spectral-gap} to establish the singular value and spectral gap of random quantum channels. In Section \ref{sec:quantum-expander} we establish that random quantum channels are quantum expanders, after proving that the Perron-Frobenius eigenvector has large entropy. Finally, in Section \ref{sec:TIMPS} we discuss our second application, a maximum entropy principle for infinite translationally-invariant matrix product states. 

\section{Random quantum channels}\label{sec:random-quantum-channels}

In this section we introduce the model of random quantum channels we shall study in the paper. Introduced in \cite{hayden2008counterexamples} in order to tackle the additivity problem for the minimum output entropy, the model of quantum channels we consider has received a lot of attention in the recent years, mainly due to its generality and simplicity. Indeed, while other models of randomness (e.g.~random mixed unitary channels, see \cite{hayden2004randomizing} or \cite{hastings2007random}) have also been considered, the model we describe below is the most natural from a probabilistic perspective, since it does not impose any constraints on the linear map, except complete positivity and trace preservation.

Let us fix a triple of integers $(d,n,k)$ satisfying $d \leq nk$ and consider an isometry
 $V:\mathbb C^d \to \mathbb C^n \otimes \mathbb C^k$; the isometric property reads $V^*V - I_d$. We shall choose the isometry $V$ at random, from the \emph{Haar measure}. Indeed, there is a unique probability measure on the set of isometries $\CC^d \to \CC^{nk}$ which is invariant under left and right multiplication by arbitrary unitary operators of appropriate size. In practice, one can sample such a random Haar isometry by truncating $nk-d$ columns off a random Haar unitary matrix $U$ of size $nk$. 
 
A random isometry as above induces a \emph{random quantum channel} 
\begin{align*}
    \Phi:M_d(\mathbb C) &\to M_n(\mathbb C)\\
X &\mapsto [\mathrm{id}_n \otimes \mathrm{Tr}_k](VXV^*).
\end{align*}

To a (random) quantum channel $\Phi$ as above, we associate three important objects (see \cite[Section 2.2]{watrous2018theory} for the theory on the subject):
\begin{itemize}
    \item its \emph{Kraus decomposition} $\Phi(X) = \sum_{i=1}^k A_i X A_i^*$. The matrices $A_i$ are actually the $n \times d$ blocks of the isometry $V$:
    $$V = \sum_{i=1}^k A_i \otimes e_i,$$
    for some fixed orthonormal basis $\{e_i\}_{i=1}^k$ of $\CC^k$. These are usually refer as Kraus operators of the channel $\Phi$.
    \item its \emph{Choi matrix} $C_\Phi \in M_{nd}(\CC)$, defined as the action of the channel $\Phi$ on half of a maximally entangled state:
    $$C_\Phi := [\Phi \otimes \mathrm{id}](\omega_d).$$
    Above, $\omega_d := \Omega_d\Omega_d^*$ is the rank-one projection on the maximally entangled state
    \begin{equation}\label{eq:maximally-entangled-vector}
        \CC^d \otimes \CC^d \ni \Omega_d := \frac{1}{\sqrt d} \sum_{i=1}^d e_i \otimes e_i
    \end{equation}
    for some fixed basis $\{e_i\}_{i=1}^d$ of $\CC^d$.
    \item its \emph{super-operator} $F\in M_{n^2 \times d^2}(\mathbb C)$, which is the matrix of $\Phi$, seen as a linear operator $\Phi:\CC^{d^2} \to \CC^{n^2}$. It's an easy exercise to show that 
    \begin{equation}\label{eq:super-operator}
    F = \sum_{i= 1}^k A_i \otimes \overline{A_i}.
    \end{equation}
    Moreover, $F$ is the \emph{realignment} of the Choi matrix $C_\Phi$, see \cite[Section 10.2]{bengtsson2006geometry}.
\end{itemize}
The Choi matrix and the super-operator are depicted in Figure \ref{fig:quantum-channel} in the Penrose tensor notation. We recall that in this notation, tensors are depicted by boxes, and tensor contractions, such as traces or matrix multiplications, are depicted by wires connecting different decorations attached to the boxes. The shape of the decorations indicate to which vector space they correspond, different shapes corresponding to vector spaces of different dimensions.

\begin{figure}[ht]
\centering
\includegraphics[scale=1]{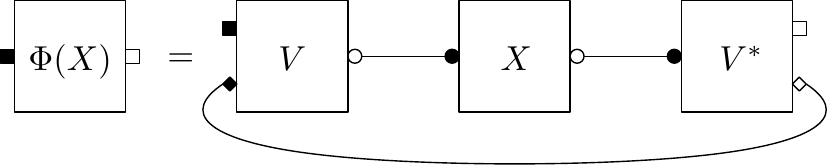}\\
\vspace{.5cm}
\includegraphics[scale=1]{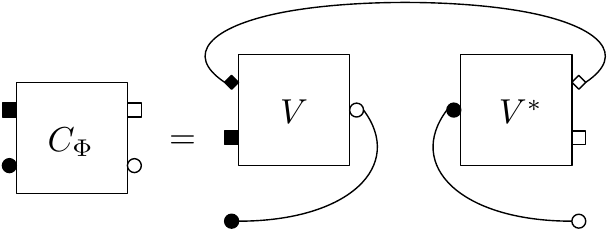} \qquad\qquad\qquad\qquad \includegraphics[scale=1]{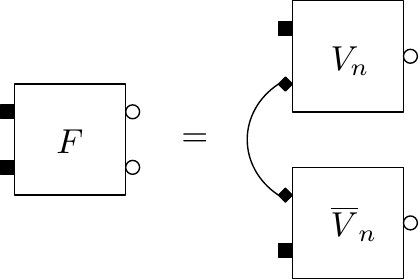}
\caption{A graphical representation of the output $\Phi(X)$ of a quantum channel given by an isometry $V$. In the bottom panel, the Choi matrix (left) and the super-operator (right) associated to $\Phi$. Round shaped decorations correspond to $\CC^d$, square shaped decorations correspond to $\CC^n$, and diamond shaped decorations correspond to $\CC^k$.}
\label{fig:quantum-channel}
\end{figure}

In what follows we shall study sequences of random quantum channels of increasing size, in the following \emph{asymptotic regime}. We shall assume that the parameter $k$ (the size of the environment, or the number of Kraus operators) is a fixed positive integer, and we shall consider a sequence of integers $d_n$ which will behave like $d_n \sim \lambda n$, for another fixed real constant $\lambda \in (0,k)$. We shall then define a sequence of random quantum channels $\Phi_n : M_{d_n}(\CC) \to M_n(\CC)$ as above, starting from Haar distributed random isometries $V_n : \CC^{d_n} \to \CC^n \otimes \CC^k$. We summarize our hypotheses below:
\begin{equation}\label{eq:asymptotic-regime}
    \begin{cases}
    & k \geq 1 \text{ fixed integer}\\
    & n \to \infty\\
    & \lambda \in (0,k) \text{ fixed real}\\
    & d_n \to \infty, \, d_n \sim \lambda n\\
    & V_n : \CC^{d_n} \to \CC^n \otimes \CC^k \text{ Haar-distributed random isometry}\\
    & \Phi_n : M_{d_n}(\CC) \to M_n(\CC) \text{ random quantum channel induced by } V_n.
    \end{cases}
\end{equation}

\section{Integration over the unitary group. The Weingarten function.}\label{sec:Weingarten}

Here we describe the main ingredients of the unitary Weingarten function and calculus that we are going to use to compute averages over the unitary group; for a complete description of this function we refer to \cite{collins2003moments,collins2006integration}. 

Let us start with some notation needed in the combinatorial study of permutations. For a permutation $\sigma \in \mathcal S_p$, we denote by $|\sigma|$ its \emph{length}, that is the minimum number $k$ such that $\sigma$ can be written as a product of $k$ transpositions. We shall write $\#\sigma$ for the \emph{number of cycles} of $\sigma$ (including the trivial fixed points). These quantities are related by the formula $|\sigma| + \#\sigma = p$.

\begin{definition}
The unitary Weingarten function $\Wg(n,\sigma)$ is a function taking as inputs a dimension parameter $n$ and a permutation $\sigma$ in the symmetric group $\mathcal S_p$. It is the pseudo inverse of the function $\sigma\mapsto n^{\# \sigma}$ under the convolution for the symmetric group.
\end{definition}

The interest of the Weingarten function lies in the following theorem from \cite{collins2003moments}, which states that the average of a monomial over the unitary group can be computed in terms of sums of Weingarten functions. We shall use the notation $[n]:=\{1,2,\ldots, n\}$.

\begin{theorem}
Let n be a positive integer and $i=(i_1,...,i_p)$, $i'=(i'_1,...,i'_p)$, $j=(j_1,...,j_p)$ and $j'=(j'_1,...,j'_p)$ be $p$-tuples of positive integers from $[n]$. Then

\begin{equation}\label{eq:Weingarten-formula}
\int_{\mathcal U_n} U_{i_1j_1} \cdots U_{i_pj_p} \bar U_{i'_1j'_1} \cdots \bar U_{i'_pj'_p}\, \mathrm{d}U=\sum_{\sigma,\tau \in \mathcal S_p} \delta_{i_1i'_{\sigma(1)}}... \delta_{i_pi'_{\sigma(p)}}\delta_{j_1j'_{\tau(1)}} ... \delta_{j_pj'_{\tau(p)}} \Wg(n,\sigma^{-1}\tau).
\end{equation}
\end{theorem}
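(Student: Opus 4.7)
The plan is to interpret the integral in a representation-theoretic way via Schur--Weyl duality. Let
\[
T_p := \int_{\mathcal U_n} U^{\otimes p} \otimes \bar U^{\otimes p}\, \mathrm{d}U,
\]
viewed as an operator on $(\CC^n)^{\otimes p} \otimes \overline{(\CC^n)^{\otimes p}}$. The left- and right-invariance of the Haar measure implies that $T_p$ commutes with the diagonal action $U^{\otimes p} \otimes \bar U^{\otimes p}$ for every $U\in\mathcal U_n$. Identifying the second tensor factor with the dual in the standard way, this is equivalent to saying that $T_p$ lies in the commutant of the $\mathcal U_n$-action on $(\CC^n)^{\otimes p}$ (on each side). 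Schur--Weyl duality then identifies this commutant with the image $\rho(\CC[\mathcal S_p])$ of the group algebra, where $\rho(\sigma)(e_{i_1}\otimes\cdots\otimes e_{i_p}) = e_{i_{\sigma^{-1}(1)}}\otimes\cdots\otimes e_{i_{\sigma^{-1}(p)}}$.

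The next step is to expand $T_p = \sum_{\sigma,\tau\in\mathcal S_p} c(\sigma,\tau)\,\rho(\sigma)\otimes\rho(\tau)$ and to determine the coefficients. A second application of Haar invariance (multiplying $U$ on the left by an arbitrary unitary) forces $c(\sigma,\tau)$ to depend only on the product $\sigma^{-1}\tau$, so we may write $c(\sigma,\tau)=w(\sigma^{-1}\tau)$ for a single function $w$ on $\mathcal S_p$. Computing the Gram matrix of the permutation operators, one obtains
\[
\Tr\bigl[\rho(\sigma)^*\rho(\tau)\bigr] = n^{\#(\sigma^{-1}\tau)},
\]
because each cycle of $\sigma^{-1}\tau$ contributes a factor $n$ from the contraction of the identity along the corresponding loop. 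Imposing that $T_p$ acts as the identity on its image (i.e.\ that $T_p^2 = T_p$ as an orthogonal projector onto the invariant subspace) reduces the problem to inverting the convolution operator $\sigma\mapsto n^{\#\sigma}$ on $\CC[\mathcal S_p]$, whose pseudoinverse is, by Definition of the paper, precisely the Weingarten function $\Wg(n,\cdot)$. Hence $w = \Wg(n,\cdot)$.

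Finally, I would read off the matrix entries. The left-hand side of \eqref{eq:Weingarten-formula} equals $\langle e_{i}\otimes\bar e_{i'}|\, T_p\, |e_{j}\otimes\bar e_{j'}\rangle$ for the tensored basis vectors, while the right-hand side is obtained by expanding $\langle e_i|\rho(\sigma)|e_{i'}\rangle=\prod_k \delta_{i_k\, i'_{\sigma(k)}}$ and similarly for the $(j,j')$ pair, with weight $\Wg(n,\sigma^{-1}\tau)$; bookkeeping the complex conjugation on the second factor converts the permutation action into exactly the indicated pairings of primed with unprimed indices, matching the stated formula.

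The main obstacle I anticipate is not the algebraic computation of the Gram matrix, which is immediate, but two more delicate points: first, invoking Schur--Weyl duality as a black box (which is standard and would simply be cited from \cite{collins2003moments,collins2006integration}); and second, handling the convention issue of how complex conjugation in the second tensor factor translates into a permutation action on the dual space, so that one $\sigma$ pairs the row indices and an independent $\tau$ pairs the column indices. Once these conventions are aligned, the identity follows directly from the characterization of $\Wg$ as the pseudoinverse of $\sigma\mapsto n^{\#\sigma}$.
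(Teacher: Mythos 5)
The paper does not actually prove this statement---it is cited directly from \cite{collins2003moments,collins2006integration}---so there is no paper proof to compare with. I will assess your argument on its own.

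Your overall route (analyze $T_p = \int U^{\otimes p}\otimes\bar U^{\otimes p}\,\mathrm dU$ via Schur--Weyl, compute a Gram matrix of permutation operators, identify the pseudoinverse with $\Wg$, then read off matrix entries) is the standard way to prove the Weingarten formula. However, the central decomposition you write down,
\[
T_p = \sum_{\sigma,\tau\in\mathcal S_p} c(\sigma,\tau)\,\rho(\sigma)\otimes\rho(\tau),
\]
is incorrect if $\rho(\sigma)\otimes\rho(\tau)$ is meant literally as an operator on $(\CC^n)^{\otimes p}\otimes(\CC^n)^{\otimes p}$. First, $T_p$ does \emph{not} commute with $U^{\otimes p}\otimes I$ and $I\otimes\bar U^{\otimes p}$ separately (left/right Haar invariance gives $T_p(W^{\otimes p}\otimes\bar W^{\otimes p})=T_p=(W^{\otimes p}\otimes\bar W^{\otimes p})T_p$, i.e.\ $T_p$ is the orthogonal projection onto the invariant subspace of the \emph{diagonal} representation; it is not bi-equivariant in the way you use). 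Second, the commutant of $\{U^{\otimes p}\otimes\bar U^{\otimes p}\}$ on $(\CC^n)^{\otimes 2p}$ is the walled Brauer algebra $B_{p,p}(n)$, which is strictly larger than $\rho(\mathcal S_p)\otimes\rho(\mathcal S_p)$. Already at $p=1$ (and $n\geq 2$) the commutant is two-dimensional, spanned by $I_{n^2}$ and the rank-one operator onto the maximally entangled vector $\Omega$, whereas $\rho(\mathcal S_1)\otimes\rho(\mathcal S_1)=\CC\,I_{n^2}$; indeed $T_1=\frac1n|\Omega\rangle\langle\Omega|$ is not a multiple of $I_{n^2}$. More tellingly, if you literally expand $\rho(\sigma)\otimes\rho(\tau)$ in the computational basis you get $\prod_k\delta_{i_k\,j_{\sigma^{-1}(k)}}\prod_k\delta_{i'_k\,j'_{\tau^{-1}(k)}}$, which pairs unprimed row indices with unprimed column indices; the formula you must prove pairs $i$ with $i'$ and $j$ with $j'$. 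Your closing ``bookkeeping'' paragraph silently switches to the correct pairing, which contradicts the operator decomposition you started from rather than deriving from it.

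The fix is to replace the operator decomposition with the vector picture. Let $|\sigma\rangle:=\sum_{\ell\in[n]^p}e_{\ell_1}\otimes\cdots\otimes e_{\ell_p}\otimes\bar e_{\ell_{\sigma(1)}}\otimes\cdots\otimes\bar e_{\ell_{\sigma(p)}}\in(\CC^n)^{\otimes p}\otimes\overline{(\CC^n)^{\otimes p}}$. Each $|\sigma\rangle$ is invariant under $U^{\otimes p}\otimes\bar U^{\otimes p}$, and the First Fundamental Theorem of invariant theory (equivalently, Schur--Weyl applied to $U^{\otimes p}$ acting on $\mathrm{End}((\CC^n)^{\otimes p})$, using the vector--operator isomorphism $|\sigma\rangle\leftrightarrow\rho(\sigma)$) says these vectors span the full invariant subspace. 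Since $T_p$ is the orthogonal projection onto that span, $T_p=\sum_{\sigma,\tau}\Wg(n,\sigma^{-1}\tau)\,|\sigma\rangle\langle\tau|$ where $\Wg$ is the pseudoinverse of the Gram matrix $\langle\sigma|\tau\rangle=\Tr[\rho(\sigma)^*\rho(\tau)]=n^{\#(\sigma^{-1}\tau)}$; the fact that this pseudoinverse depends only on $\sigma^{-1}\tau$ follows from left-invariance of the Gram matrix, which is what the paper's definition of $\Wg$ as a convolution pseudoinverse encodes. Then $\langle e_i\otimes\bar e_{i'}|\sigma\rangle=\prod_k\delta_{i_k\,i'_{\sigma(k)}}$ and $\langle\tau|e_j\otimes\bar e_{j'}\rangle=\prod_k\delta_{j_k\,j'_{\tau(k)}}$ give exactly the delta-pattern in \eqref{eq:Weingarten-formula}. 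With this substitution your Gram-matrix computation and the pseudoinverse argument are both correct, and the proof goes through.
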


In \cite{collins2010random} the authors introduce a graphical paradigm in order to simplify the use of the above formula in practice. Suppose that one wants to compute the expected value of a polynomial quantity defined in terms of a Haar-distributed random unitary matrix $U \in \mathcal U_n$. Assume also that this quantity is given as a diagram $\mathcal D$ in the Penrose tensor notation. Then, 
\begin{equation}\label{eq:Wg-graphical}
\E_U \mathcal D=\sum_{\sigma,\tau \in \mathcal S_p } C_{\sigma,\tau} \Wg(n,\sigma^{-1}\tau),
\end{equation}
where the diagrams $C_{\sigma, \tau}$ can be computed by the procedure depicted in Figure \ref{fig:Wg-graphical}. One has to enumerate the matrices $U$ and $\bar U$ appearing in $\mathcal D$ from $1$ to $p$; if the numbers of $U$ and $\bar U$ boxes in $\mathcal D$ are not identical, then the result is zero. 
For any pair of permutations $\sigma, \tau \in \mathcal S_p$, we construct the new diagram $\mathcal C_{\sigma, \tau}$ as follows: we delete the $U$ and $\bar U$ boxes and we connect the inputs of $U_i$ with the inputs of $\bar U_{\sigma(i)}$, and analogously, we connect the outputs using the $\tau$ permutation. In the resulting diagram $C_{\sigma, \tau}$ loops represent traces over the identity on some Hilbert space, and should be replaced by the appropriate dimension.

\begin{figure}[ht]
\centering
\includegraphics[scale=1]{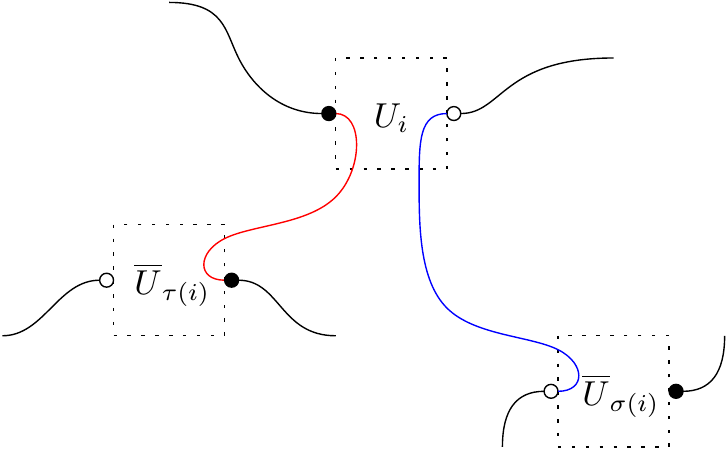}
\caption{A graphical interpretation of the Weingarten formula: in a diagrammatic notation, the boxes $U$ and $\bar U$ are deleted, and extra wires are used to connect the inputs (blue) according to $\sigma$ and the outputs (red) according to $\tau$.}
\label{fig:Wg-graphical}
\end{figure}

The following result from \cite{collins2003moments} gives an asymptotic bound of the Weingarten function when the size $n$ of the unitary matrices is growing, while the order $p$ of the monomial integrand is fixed.

\begin{lemma}\label{lem:Wg-asymptotic}
If $p \geq 1$ is a fixed integer and $\sigma\in \mathcal S_p$ is any permutation, then, as $n \to \infty$,
\begin{equation}\label{eq:Wg-asymptotic}
\Wg(n,\sigma)=n^{-p-|\sigma|}\Moeb(\sigma)(1+O(n^{-2})),
\end{equation}
where $\Moeb(\sigma)$ is a function which is multiplicative on the cycles of $\sigma$; its value for a full $p$-cycle is 
$$\Moeb((1,2,\cdots, p)) = (-1)^{p-1}\mathrm{Cat}_{p-1},$$
where $\mathrm{Cat}_p$ is the $p$-th Catalan number
$$\mathrm{Cat}_p = \frac{1}{p+1}\binom{2p}{p}.$$
\end{lemma}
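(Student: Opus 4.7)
The plan is to extract the asymptotics directly from the defining equation of the Weingarten function by a $1/n$ expansion. By definition, $\Wg(n,\cdot)$ is the pseudo-inverse, in the group algebra $\mathbb{C}[\mathcal S_p]$, of the class function $\sigma \mapsto n^{\#\sigma}$. For $n \geq p$ this map is invertible, so the defining relation reads
$$\sum_{\tau \in \mathcal S_p} \Wg(n,\tau)\, n^{\#(\tau^{-1}\sigma)} \;=\; \delta_{\sigma,e}, \qquad \forall \sigma \in \mathcal S_p.$$
Using the identity $\#\rho = p - |\rho|$, this becomes $\sum_\tau \Wg(n,\tau)\, n^{p - |\tau^{-1}\sigma|} = \delta_{\sigma,e}$, which is the equation I intend to solve term-by-term in powers of $1/n$.

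Next, I would substitute the ansatz $\Wg(n,\sigma) = n^{-p-|\sigma|}(a(\sigma) + O(n^{-2}))$ and match coefficients. The key combinatorial input is the triangle inequality $|\tau| + |\tau^{-1}\sigma| \geq |\sigma|$, with equality exactly when $\tau$ lies on a geodesic from $e$ to $\sigma$ in the Cayley graph of $\mathcal S_p$ generated by transpositions. The set of such $\tau$ is the interval $[e,\sigma]$ in the absolute order, which (by a classical result going back to Biane) factorizes as $\prod_c NC(c)$, a product over the cycles $c$ of $\sigma$ of non-crossing partition lattices. Extracting the leading coefficient of $n^{-|\sigma|}$ then yields the Möbius-inversion equation $\sum_{\tau \in [e,\sigma]} a(\tau) = \delta_{\sigma,e}$, forcing $a(\sigma) = \Moeb(\sigma)$. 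Multiplicativity over cycles is automatic from the product structure of the interval, and the evaluation $\Moeb((1,\ldots,p)) = (-1)^{p-1}\mathrm{Cat}_{p-1}$ is Kreweras' classical formula for $\mu_{NC(p)}(\hat 0,\hat 1)$.

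Finally, I would justify that the first correction is $O(n^{-2})$ rather than merely $O(n^{-1})$. This follows from a parity observation: since $(-1)^{|\rho|} = \mathrm{sign}(\rho)$ and $\mathrm{sign}(\tau)\,\mathrm{sign}(\tau^{-1}\sigma) = \mathrm{sign}(\sigma)$, the defect $|\tau|+|\tau^{-1}\sigma|-|\sigma|$ is always a nonnegative even integer, so only even powers of $n^{-1}$ enter the expansion. The main obstacle is to make this $1/n$ expansion rigorous rather than purely formal; the cleanest way is to fall back on the representation-theoretic expression
$$\Wg(n,\sigma) \;=\; \frac{1}{(p!)^2}\sum_{\substack{\lambda \vdash p \\ \ell(\lambda) \leq n}} \frac{\chi^\lambda(e)^2}{s_\lambda(1^n)}\,\chi^\lambda(\sigma^{-1}),$$
and to expand each Schur-polynomial dimension $s_\lambda(1^n)$ via the hook-content formula, which has leading term $n^{p}/\prod_{\text{hooks}}$ with corrections in powers of $1/n^2$. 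This reduces the claim to an elementary bound and confirms both the leading Möbius coefficient and the $O(n^{-2})$ remainder.
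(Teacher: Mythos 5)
The paper does not prove this lemma; it is cited verbatim from Collins' original paper, so there is no ``paper's proof'' to compare against. Your outline is essentially the standard derivation from that reference, and the core of it is correct: the convolution relation $\sum_\tau \Wg(n,\tau)\,n^{p-|\tau^{-1}\sigma|}=\delta_{\sigma,e}$, the $1/n$ ansatz filtered through the triangle inequality $|\tau|+|\tau^{-1}\sigma|\geq|\sigma|$, Biane's identification of the geodesic interval $[e,\sigma]$ with a product of non-crossing partition lattices over the cycles of $\sigma$, and Kreweras' formula $\mu_{NC(p)}(\hat 0,\hat 1)=(-1)^{p-1}\mathrm{Cat}_{p-1}$. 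Your parity argument for the $O(n^{-2})$ remainder is also the right idea: since $\Wg(n,\cdot)$ is a rational function of $n$ (it is a Cramer-rule inverse of the Gram matrix $n^{\#(\sigma^{-1}\tau)}$, which is invertible for $n\geq p$), a Laurent expansion at $n=\infty$ exists, and the evenness of the defect $|\tau|+|\tau^{-1}\sigma|-|\sigma|$ forces, by induction on the expansion order, that all odd coefficients vanish.

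The one point that is actually wrong is your parenthetical claim about the hook-content formula. For a single $\lambda$,
$$s_\lambda(1^n)=\frac{\chi^\lambda(1)}{p!}\prod_{(i,j)\in\lambda}(n+j-i)=\frac{\chi^\lambda(1)}{p!}\,n^p\left(1+\frac{1}{n}\sum_{(i,j)\in\lambda}(j-i)+O(n^{-2})\right),$$
and the content sum $\sum(j-i)$ is generally nonzero (for $\lambda=(2)$ it equals $1$), so $1/s_\lambda(1^n)$ carries a genuine $O(n^{-1})$ correction — the hook-content expansion alone does \emph{not} give $O(n^{-2})$ term-by-term. The cancellation of the $n^{-1}$ term in $\Wg$ only appears after summing over $\lambda$: the transposition $\lambda\leftrightarrow\lambda'$ flips the sign of the content sum while multiplying $\chi^\lambda(\sigma)$ by $\mathrm{sign}(\sigma)$, which is exactly the representation-theoretic shadow of the sign/parity observation you already made in the group-algebra picture. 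So you should either keep the parity argument as the rigorous route (supplying existence of the expansion via rationality of $\Wg$) or, if you use the character formula, pair up $\lambda$ with $\lambda'$ explicitly; as written, the hook-content remark contradicts itself.
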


\section{Lower bound on the norm}\label{sec:LB}
As described in the introduction, our strategy for showing that random quantum channels have a large spectral gap consists of two steps. In this section we accomplish the first step, providing a lower bound on the largest singular value (i.e.~the operator norm) of the super-operator $F$. 

\begin{proposition}\label{prop:limit-overlap}
	Consider a sequence of random quantum channels $\Phi_n:M_{d_n}(\CC) \to M_n(\CC)$ as in \eqref{eq:asymptotic-regime} and let $F_n$ be the corresponding super-operators \eqref{eq:super-operator} associated with the channel $\Phi_n$. Define the overlap
	$$\mathbb R \ni f_n:= \Omega_{d_n}^* F_n^* F_n \Omega_{d_n} = \Tr[\omega_{d_n} \cdot F_n^*F_n],$$
	where $\Omega_{d_n}$ is the maximally entangled vector \eqref{eq:maximally-entangled-vector} on the input space $\CC^{d_n}$ and $\omega_{d_n}$ is the corresponding quantum state.
	Then, for all integers $p \geq 1$

\begin{equation}
\lim_{n \to \infty} \mathbb E f_n^p = \left(\lambda+\frac 1 k -\frac \lambda {k^2}\right)^{p}.
\end{equation}
\end{proposition}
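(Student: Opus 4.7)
The plan is to reduce $f_n$ to a scalar trace involving only the isometry $V_n$, apply the graphical Weingarten formula, and then isolate the leading asymptotic contribution through a geodesic condition on the Cayley graph of $\mathcal S_{2p}$. Using $F_n = \sum_{i=1}^k A_i \otimes \overline{A_i}$ and the standard identity $\Omega_d^*(X\otimes Y)\Omega_d = d^{-1}\Tr(XY^T)$, a direct expansion gives
\[
f_n \;=\; \frac{1}{d_n}\sum_{i,j=1}^k \|A_i^* A_j\|_2^2 \;=\; \frac{1}{d_n}\Tr(M_n^2), \qquad M_n := [\id_n \otimes \Tr_k](V_n V_n^*).
\]
Hence $\E f_n^p = d_n^{-p}\,\E[(\Tr M_n^2)^p]$ is a polynomial of total degree $4p$ in the entries of $V_n$, viewed as the first $d_n$ columns of a Haar unitary $U_n\in\mathcal U_{nk}$. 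Applying the graphical Weingarten formula \eqref{eq:Wg-graphical} to the ``$p$-block ladder'' diagram yields
\[
\E f_n^p \;=\; d_n^{-p}\sum_{\sigma,\tau\in\mathcal S_{2p}} n^{\#(\sigma\alpha)}\,k^{\#\sigma}\,d_n^{\#\tau}\,\Wg(nk,\sigma^{-1}\tau),
\]
where $\alpha := (1\,2)(3\,4)\cdots(2p-1\,2p) \in \mathcal S_{2p}$ encodes the $p$ disjoint $n$-wire transpositions produced by the trace closures, while the $k$- and $d$-wires match identically inside each $V,V^*$ pair.

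\textbf{Leading-order selection.} Invoking Lemma \ref{lem:Wg-asymptotic} with $d_n \sim \lambda n$ and using $\#\pi + |\pi| = 2p$, the $n$-exponent of the generic $(\sigma,\tau)$-term simplifies to
\[
p - \bigl(|\tau| + |\sigma^{-1}\tau| + |\sigma\alpha|\bigr).
\]
Since the Cayley distance $d(\pi_1,\pi_2) := |\pi_1^{-1}\pi_2|$ is bi-invariant, the bracketed quantity is the length of the path $e \to \tau \to \sigma \to \alpha$ in the Cayley graph of $\mathcal S_{2p}$ and by the triangle inequality is bounded below by $|\alpha| = p$. Thus only pairs realising equality (``geodesic pairs'') contribute in the limit. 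Because $\alpha$ is a product of $p$ disjoint commuting transpositions, every geodesic from $e$ to $\alpha$ stays inside the Boolean lattice of sub-products of $\{t_\ell := (2\ell-1,2\ell)\}_{\ell=1}^p$, and the geodesic condition collapses to $\tau \leq \sigma \leq \alpha$ in this lattice. This yields exactly $3^p$ leading pairs, indexed by a per-block choice $(\tau_\ell,\sigma_\ell)\in\{(e,e),\,(e,t_\ell),\,(t_\ell,t_\ell)\}$.

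\textbf{Factorisation and conclusion.} Since loop counts, Cayley lengths, and $\Moeb$ are all multiplicative on disjoint cycles, the contribution of each leading pair factors across the $p$ blocks. A direct single-block computation (using $\Moeb((1\,2)) = -1$) yields the three local contributions $\lambda$, $-\lambda/k^2$ and $1/k$, and applying the multinomial theorem to the $3^p$ block-choices produces the claimed limit $(\lambda + 1/k - \lambda/k^2)^p$. The non-geodesic pairs give $O(n^{-1})$ terms and the $O(n^{-2})$ Weingarten remainder is absorbed uniformly because $|\mathcal S_{2p}|^2$ is fixed in $n$. The main obstacle is the geodesic-selection step: one has to combine bi-invariance of the Cayley metric with the explicit description of geodesics from $e$ to an involution in order to conclude that the $3^p$ block-factorised pairs exhaust the leading contribution. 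Once this is established, everything else is the multinomial arithmetic above.
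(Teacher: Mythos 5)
Your proposal is correct and follows essentially the same route as the paper: Weingarten calculus applied to the $2p$-box diagram, selection of the leading terms via the triangle inequality for the Cayley length, explicit parametrization of geodesics to a fixed-point-free involution by nested subsets, and the multinomial collapse to $(\lambda + 1/k - \lambda/k^2)^p$. Your summation variables $(\sigma,\tau)$ are the paper's $(\alpha,\beta)$ precomposed with $\delta$, so the two formulas and the geodesic conditions $\tau\leq\sigma\leq\alpha$ versus $\alpha\leq\beta\leq\delta$ are identical after this relabeling. The preliminary algebraic identity $f_n = d_n^{-1}\Tr(M_n^2)$ with $M_n = \Phi_n(I_{d_n})$ is a clean and correct observation (and $\Omega_d^*(X\otimes Y)\Omega_d = d^{-1}\Tr(XY^T)$ does give $\sum_{i,j}\|A_i^*A_j\|_2^2 = \Tr(M_n^2)$), but it does not change the substance of the argument, which is the same as the paper's.
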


\begin{figure}[ht]
\centering
\includegraphics[scale=1]{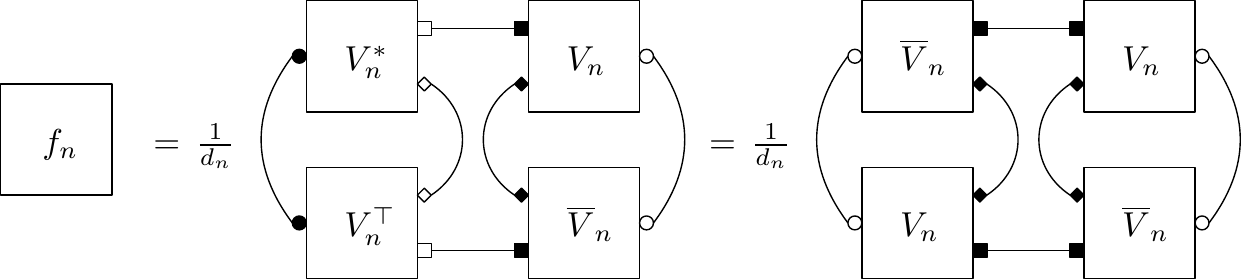}
\caption{Graphical representation of the random (scalar) overlap $f_n$. The round decorations correspond to the Hilbert space $\CC^{d_n}$, the square decorations correspond to $\CC^n$, and the diamond decorations correspond to $\CC^k$. The normalization factor $d_n^{-1}$ comes from the vectors $\Omega_{d_n}$.}
\label{fig:f_n}
\end{figure}

\begin{proof}
The proof is a straightforward application of the Weingarten graphical calculus. We shall compute the moments of the random variable $f_n$. For a moment of order $p$, we need to evaluate the expectation value of a diagram consisting of $p$ disjoint copies of the diagram in Figure \ref{fig:f_n}. Such a diagram contains $2p$ copies of $V_n$ (or $V_n^\top$) boxes, and $2p$ copies of $\overline V_n$ (or $V_n^*$) boxes, so the sum in the Weingarten formula \eqref{eq:Weingarten-formula} is indexed by two permutations of $2p$ elements:
$$\E f_n^p =\sum_{\alpha,\beta \in \mathcal S_{2p} } C_{\alpha,\beta} \Wg(nk,\alpha^{-1}\beta),$$
where the $C_{\alpha,\beta}$ coefficient is the diagram obtained by erasing $V_n$ boxes from $p$ copies of Figure \ref{fig:f_n} (right panel), connecting the black decorations of the $i$-th $V_n$ box with the corresponding black decorations of the $\alpha(i)$-th $\overline V_n$ box, and connecting the white decorations of the $i$-th $V_n$ box with the corresponding white decorations of the $\beta(i)$-th $\overline V_n$ box. We shall denote the two $V_n$ boxes in the $i$-th copy of the diagram from the right panel of Figure \ref{fig:f_n} by $i^T,i^B$ (for, respectively, the box on the top row and the box on the bottom row). We also introduce the permutation
$$\mathcal S_{2p} \ni \delta := \prod_{i=1}^p (i^T, i^B),$$
permuting the top with the bottom row in the diagram. 

The resulting diagram $C_{\alpha,\beta}$ is, up to the pre-factor $d_n^{-p}$, a collection of loops, as follows:
\begin{itemize}
\item $\#\alpha$ loops of dimension $n$, corresponding to square-shaped decorations. The initial wiring is given by the identity permutation and the additional wiring is given by $\alpha$. 
\item $\#(\delta^{-1}\alpha)$ loops of dimension $k$, corresponding to diamond-shaped decorations. The initial wiring is given by the permutation 
$\delta$ and the additional wiring is given by $\alpha$.
\item $\#(\delta^{-1}\beta)$ loops of dimension $d_n$, corresponding to round-shaped decorations. The initial wiring is given by the permutation $\delta$ and the additional wiring is given by $\beta$.
\end{itemize}
Putting everything together, we get
$$\E f_n^p = d_n^{-p}\sum_{\alpha, \beta \in \mathcal S_{2p}} n^{\#\alpha}  k^{\#(\delta^{-1}\alpha)} d_n^{\#(\delta^{-1}\beta)} \Wg(nk,\alpha^{-1}\beta).$$
Using the asymptotic formula \eqref{eq:Wg-asymptotic} for the Weingarten function from Lemma \ref{lem:Wg-asymptotic} and the scaling $d_n \sim \lambda n$, we get 
$$\E f_n^p = (1+o(1))
\sum_{\alpha, \beta \in \mathcal S_{2p}} n^{p-(|\alpha| + |\alpha^{-1}\beta| + |\beta^{-1}\delta|)}  k^{-|\alpha^{-1}\beta| -|\delta^{-1}\alpha|} \lambda^{p-|\delta^{-1}\beta|} \Moeb(\alpha^{-1}\beta).$$
Now, we study the asymptotics of this expression as a function of $n$ when $n\to \infty$ (the other parameters being fixed, see \eqref{eq:asymptotic-regime}):
\begin{equation}
\text{exponent of } n  = p-(|\alpha| + |\alpha^{-1}\beta| + |\beta^{-1}\delta|) \leq p-|\delta| =0,
\end{equation}
where we have used the triangle inequality
$$|\alpha| + |\alpha^{-1}\beta| + |\beta^{-1}\delta| \geq |\delta| = p.$$
This inequality is saturated when $\id \to \alpha \to \beta \to \delta$ is a geodesic in $\mathcal S_{2p}$. Thus,
 \begin{equation}
\lim_{n\to\infty} \E f_n^p=\sum_{\mathrm{id}\to \alpha \to \beta  \to \delta}  k^{-|\delta^{-1}\alpha| - |\alpha^{-1}\beta|}  \lambda^{p-|\delta^{-1}\beta|} \Moeb(\alpha^{-1}\beta).
\end{equation}

Now, $\delta$ is a product of $p$ disjoint transpositions $\delta=\prod_{i=1}^p \tau_i$, where $\tau_i=(i^T,i^B)$. Permutations $\alpha$ and $\beta$ lie on the geodesic $\mathrm {id}\to \delta$ if and only if there exist two subsets $\emptyset \subseteq A \subseteq B\subseteq [p]$ such that $\alpha=\prod_{i\in A} \tau_i$ and $\beta=\prod_{i\in B} \tau_i$. Since $\alpha^{-1}\beta$ is a product of $|B\backslash A|$ transpositions of disjoint support, then $\Moeb(\alpha^{-1}\beta)=(-1)^{|B\backslash A|}$. Taking this into account we get 
\begin{equation}\begin{aligned}
\lim_{n\to\infty}\E f_n^p&= \sum_{\emptyset \subseteq A \subseteq B\subseteq [p]}  k^{-p+|A|-|B\backslash A|}  \lambda^{|B\backslash A|+|A|} (-1)^{|B\backslash A|}\\
&= k^{-p}\sum_{\emptyset \subseteq A \subseteq B\subseteq [p]}  (k\lambda)^{|A|}\left(\frac {-\lambda}k\right)^{|B\backslash A|}\\
\end{aligned}
\end{equation}
Using the multinomial identity 
$$ \sum_{\emptyset \subseteq A \subseteq B\subseteq [p]}  x^{|A|}y^{|B\backslash A|}=(1+x+y)^p,$$
we obtain the desired result
$$\lim_{n\to\infty}\E f_n^p
= \left(\lambda+\frac 1 k -\frac \lambda {k^2}\right)^p.$$
\end{proof}

\begin{theorem}\label{thm:bound-s1}
	Consider a sequence of random quantum channels $\Phi_n:M_{d_n}(\CC) \to M_n(\CC)$ given by 
	\[\Phi_n(X)= [\id_n \otimes \Tr_k ](V_nXV_n^*)=\sum_{i=1}^k A_i X A_i^*,\]
	where $k$ is a fixed constant, $d_n \sim \lambda n$ for another constant $\lambda \in (0,k)$, and $V_n :\CC^{d_n} \to \CC^n \otimes \CC^k$ is a Haar-distributed random isometry. Then, if $F_n$ is the super-operator associated with the channel $\Phi_n$, we have that,  $\forall \epsilon>0$,
$$\lim_{n \to \infty}\mathbb{P}\left[\|F_n\|_\infty \geq \sqrt{\lambda+\frac 1 k -\frac \lambda {k^2}}-\epsilon  \right]=1.$$
If, moreover, $d_n = (1+O(n^{-2}))\lambda n$, the above probability is lower bounded, at fixed $n$, by $1-\epsilon^{-2}O(n^{-2})$, and thus, almost surely
\begin{equation}\label{eq:almost-sure-LB}
\liminf_{n \to \infty} \|F_n\|_\infty \geq \sqrt{\lambda+\frac 1 k -\frac \lambda {k^2}}.
\end{equation}
\end{theorem}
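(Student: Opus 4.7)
The plan is to leverage Proposition \ref{prop:limit-overlap} to show that the random overlap $f_n$ concentrates at the claimed value, and then to use the fact that $\|F_n\|_\infty^2$ dominates $f_n$. Since $\Omega_{d_n}$ is a unit vector in $\CC^{d_n}\otimes\CC^{d_n}$, we have the elementary inequality
$$\|F_n\|_\infty^2 = \|F_n^*F_n\|_\infty \;\geq\; \Omega_{d_n}^* F_n^*F_n \Omega_{d_n} = f_n,$$
so it suffices to prove that $f_n$ is, with high probability, at least $L-\epsilon'$, where $L := \lambda + 1/k - \lambda/k^2$, and then to convert to the square-root bound.

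My first step would be to apply Proposition \ref{prop:limit-overlap} with $p = 1$ and $p = 2$ to obtain $\E f_n \to L$ and $\E f_n^2 \to L^2$; subtracting gives $\mathrm{Var}(f_n) \to 0$. Chebyshev's inequality then yields, for any $\delta>0$,
$$\mathbb P[f_n \leq L - \delta] \leq \mathbb P[|f_n - \E f_n| \geq \delta - |\E f_n - L|] \to 0,$$
as soon as $n$ is large enough that $|\E f_n - L| < \delta$. Combining this with the deterministic inequality $\|F_n\|_\infty \geq \sqrt{f_n}$, and choosing $\delta$ small enough so that $\sqrt{L-\delta} \geq \sqrt{L}-\epsilon$, proves the convergence-in-probability statement.

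For the almost-sure part, I would promote the variance bound to a quantitative one. Redoing the computation in Proposition \ref{prop:limit-overlap} for $p=1,2$ without discarding the sub-leading corrections, the Weingarten expansion \eqref{eq:Wg-asymptotic} contributes an $O(n^{-2})$ multiplicative error, and the hypothesis $d_n = (1 + O(n^{-2}))\lambda n$ ensures that the replacement $d_n \leftrightarrow \lambda n$ in the combinatorial sum only incurs another $O(n^{-2})$ error (rather than the $O(n^{-1})$ error one would get from $d_n \sim \lambda n$ alone). The geodesic terms in the two computations assemble to give $\E f_n^2 - (\E f_n)^2 = O(n^{-2})$ at fixed $n$, so Chebyshev produces the explicit lower bound $1 - \epsilon^{-2}O(n^{-2})$. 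Since $\sum_n n^{-2} < \infty$, Borel--Cantelli then gives $\liminf_n f_n \geq L$ almost surely, whence \eqref{eq:almost-sure-LB}.

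The main technical obstacle, and the step I would spend the most care on, is the quantitative variance bound. The leading-order Weingarten argument in Proposition \ref{prop:limit-overlap} keeps only geodesic pairs $(\alpha,\beta)$; for the sharper $O(n^{-2})$ rate one must verify that the non-geodesic contributions, together with the Weingarten error terms and the $d_n$ vs.\ $\lambda n$ discrepancy, all combine into a single $O(n^{-2})$ remainder in $\E f_n^2 - (\E f_n)^2$. This requires being attentive to cancellations: both $\E f_n^2$ and $(\E f_n)^2$ individually have $O(n^{-1})$-type corrections in general, but the leading corrections agree due to the factorization of the dominant permutation structure, leaving $O(n^{-2})$ in the difference. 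Once that cancellation is checked, the rest of the argument is a routine Chebyshev / Borel--Cantelli package.
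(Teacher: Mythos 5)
Your approach is the same as the paper's: lower-bound $\|F_n\|_\infty$ by $\sqrt{f_n}$, use the moment limits from Proposition~\ref{prop:limit-overlap} together with Chebyshev to get concentration, and close with Borel--Cantelli. The cancellation you flag as ``the main technical obstacle'' is in fact absent, and the paper notes why: since $|\alpha|+|\alpha^{-1}\beta|+|\beta^{-1}\delta|$ always has the same parity as $|\delta|=p$, non-geodesic pairs $(\alpha,\beta)$ already lose at least \emph{two} powers of $n$, so each moment $\E f_n^p$ is individually $L^p+O(n^{-2})$ (given the $O(n^{-2})$ Weingarten error and $d_n=(1+O(n^{-2}))\lambda n$), and no delicate cancellation between $\E f_n^2$ and $(\E f_n)^2$ is required.
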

\begin{proof}
Recall from Proposition \ref{prop:limit-overlap} that the quantity $f_n = \Omega_{d_n}^* F_n^* F_n \Omega_{d_n} = \|F_n \Omega_{d_n}\|^2$ converges in probability to $\lambda+\frac 1 k -\frac \lambda {k^2}$; since $\Omega_{d_n}$ is a unit vector, $\sqrt{f_n}$ is a lower bound for the operator norm of $F_n$, thus proving the main claim. The quantitative bound at fixed $n$ follows from Chebyshev's inequality and the observation that all the approximations in the proof of Proposition \ref{prop:limit-overlap} are of order $1+O(n^{-2})$. Indeed, note that if the geodesic inequalities are not saturated, there is a jump of one order in the exponent of $n$: this follows that the parity of the map $\sigma \mapsto |\alpha^{-1}\sigma| + |\sigma^{-1}\beta|$ is constant, for any permutations $\alpha, \beta$. Moreover, note that a similar jump in the asymptotic behavior of the Weingarten function follows from \eqref{eq:Wg-asymptotic}. Finally, the almost sure convergence follows from the Borel-Cantelli lemma \cite[Theorem 3.18]{kallenberg2002foundations}.  
\end{proof}

\begin{remark}
If, instead of analyzing $\langle \Omega_{d_n}, F_n^*F_n \Omega_{d_n} \rangle$ in the proof above, one looks at $\langle \Omega_n, F_nF_n^* \Omega_n \rangle$, using the isometry property $V_n^*V_n = I_d$ (which is equivalent to the trace-preservation property of the quantum channel $\Phi_n$), one would get the asymptotic lower bound $\|F\| \geq \lambda$. Note that this bound is worse that the one in the statement of the theorem, since $\lambda < k$.
\end{remark}

\begin{remark}
Instead of just computing the limiting overlap $\langle \Omega_{d_n}, F_n^*F_n \Omega_{d_n} \rangle$, it would be interesting to analyze the full limiting eigenvalue distribution of the random matrix $F_n^*F_n$; however, this seems to be out or reach at the current time with the moment techniques used in the proof of Proposition \ref{prop:limit-overlap} and with the Weingarten calculus as the main tool. 
\end{remark}

\section{Upper bound on the norm of the restriction}\label{sec:UB}

This section contains the second result needed for the spectral gap, an upper bound on the second singular value of the super-operator corresponding to a random quantum channel. The proof is largely inspired by G.~Pisier's work \cite{pisier2014quantum}, where he gave a different point of view on Hastings' quantum expander result from \cite{hastings2007random}. Since the model of random quantum channels we consider here is different (random isometries as opposed to random mixed unitary channels in \cite{hastings2007random,pisier2014quantum}), we going to give a self-contained presentation, and, on the way, slightly generalize some of the technical results from (the Appendix) of \cite{pisier2014quantum}. 

The main result of this section, Theorem \ref{thm:bound-s2}, which is the upper bound on the second singular value of the super-operators, is a consequence of two propositions: a comparison of our random model with a Gaussian model (Proposition \ref{prop:compare-with-Gaussian}) and a bound on the Gaussian model (Proposition \ref{prop:bound-Gaussian-model}). 

We start with the result relating the sequence of super-operators $F_n$ to a Gaussian model. We would like to point out that such techniques have been used to great success in Random Matrix Theory (e.g.~\cite{tao2011random}). The idea here is that, in Proposition \ref{prop:limit-overlap} we have identified the vector $\Omega_{d_n}$ as having a large overlap with a right-singular-vector corresponding to a (possibly) large singular value. To show a spectral gap, we will prove that, the restriction of $F_n$ the orthogonal complement of $\Omega_{d_n}$ has a relatively small norm. 

Let us recall the definition of the \emph{Ginibre ensemble} of random matrices. A $M \times N$ random matrix $X$ having i.i.d.~entries distributed as a centered complex Gaussian random variable with variance $\sigma^2$ is said to have a Ginibre distribution with parameters $(M,N;\sigma^2)$; we write $X \in \mathrm{Gin}(M,N;\sigma^2)$. We shall need the following lemma, which can be found in the Appendix of \cite{pisier2014quantum} in the case of square matrices. We generalize it here for rectangular matrices and we explicitly characterize the large dimension limit of the constants.

\begin{lemma}\label{lem:expected-tensor-positive-part}
For integers $M \geq N$, let $Y \in \mathrm{Gin}(M,N;1/M)$ be a random Ginibre matrix and denote $H:=\E(|Y|\otimes |\overline{Y}|) \in M_{N^2}(\CC)$.
Then,
\begin{equation}\label{eq:def-chi}
H= \omega_{N}+\chi_{M,N} (I_{N^2}-\omega_{N}) \qquad \text{ with } \qquad \chi_{M,N} = \frac{\E\|Y\|_1^2-1}{N^2-1}.
\end{equation}
For all $M,N$, we have $\chi_{M,N} \geq 1/(N+1)>0$. Moreover, in the limit where $N \to \infty$ and $M \sim cN$ for some constant $c\geq 1$, 
\begin{equation}\label{eq:def-chi-c}
\lim_{N \to \infty} \chi_{cN,N} = \chi_c:=c^{-1}\left[ \int_a^b \frac{\sqrt{(x-a)(b-x)}}{2\pi \sqrt x} \mathrm{d} x\right]^2,
\end{equation}
where $a=(\sqrt c - 1)^2$ and $b=(\sqrt c + 1)^2$. The function $c \mapsto \chi_c$ is increasing and we have  $\chi_1 = (8/(3\pi))^2$ and $\lim_{c \to \infty}\chi_c=1$.
\end{lemma}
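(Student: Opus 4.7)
The plan is to break the lemma into four parts: (i) the structural formula for $H$, (ii) the lower bound $\chi_{M,N}\geq 1/(N+1)$, (iii) the asymptotic formula for $\chi_c$, and (iv) the special values $\chi_1=(8/(3\pi))^2$, $\lim_c\chi_c=1$ and monotonicity in $c$.

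For part (i), I would exploit the bi-unitary invariance of the Ginibre law: since $Y\stackrel{d}{=}UYV^*$ for any unitaries $U\in\mathcal U_M$, $V\in\mathcal U_N$, the positive part satisfies $|UYV^*|=V|Y|V^*$ in distribution, so $|Y|\otimes \overline{|Y|}\stackrel{d}{=}(V\otimes \bar V)(|Y|\otimes\overline{|Y|})(V\otimes\bar V)^*$. Taking expectation, $H$ commutes with $V\otimes\bar V$ for every unitary $V$; by Schur--Weyl duality, the commutant of this representation is $\mathrm{span}\{I_{N^2},\omega_N\}$, so $H=\alpha I_{N^2}+\beta\,\omega_N$ for some scalars. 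To pin down $\alpha,\beta$ I would take two scalar traces: $\Tr H=\E(\Tr|Y|)^2=\E\|Y\|_1^2$ and $\Omega_N^*H\Omega_N=N^{-1}\E\Tr(Y^*Y)=1$ (using $\E|Y_{ij}|^2=1/M$ and $\sum_{ij}1=MN$). Solving the resulting $2\times 2$ linear system yields $\alpha=\chi_{M,N}$ and $\beta=1-\chi_{M,N}$, which rearranges to the claimed formula.

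For part (ii), the elementary inequality $\|Y\|_1^2=(\sum_i s_i)^2\geq \sum_i s_i^2=\|Y\|_2^2$ between Schatten norms gives $\E\|Y\|_1^2\geq \E\|Y\|_2^2=N$, and hence $\chi_{M,N}\geq (N-1)/(N^2-1)=1/(N+1)$. For part (iii), I would invoke the Marchenko--Pastur theorem: with our normalization, the empirical distribution of eigenvalues of $Y^*Y$ converges (when $M/N\to c$) to MP with shape $1/c$, supported on $[(1-\sqrt{1/c})^2,(1+\sqrt{1/c})^2]$. A change of variables $x=c\lambda$ converts the MP density into exactly $\sqrt{(x-a)(b-x)}/(2\pi x)\,dx$ on $[a,b]$, and $\sqrt{\lambda}=\sqrt{x/c}$ yields
\[
\lim_{N\to\infty}\frac{\E\|Y\|_1}{N}=\int \sqrt\lambda\,d\mu_{\mathrm{MP}}(\lambda)=\frac{1}{\sqrt c}\int_a^b\frac{\sqrt{(b-x)(x-a)}}{2\pi\sqrt x}\,dx.
\]
To upgrade this from $\E\|Y\|_1/N$ to $\E\|Y\|_1^2/N^2$, I would use Gaussian concentration: $Y\mapsto\|Y\|_1$ is $1$-Lipschitz in the Frobenius norm of the rescaled matrix $\sqrt M\,Y$, so $\mathrm{Var}(\|Y\|_1)=O(1/M)$ and thus $\E\|Y\|_1^2=(\E\|Y\|_1)^2(1+o(1))$. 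Dividing by $N^2-1$ then gives $\chi_c$.

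Finally, for part (iv) the value $\chi_1$ follows by direct computation: $a=0$, $b=4$, and $\int_0^4\sqrt{(4-x)}/(2\pi)\,dx=8/(3\pi)$, squaring gives $(8/(3\pi))^2$. For $c\to\infty$, a scaling $x=c+2\sqrt c\,u$ shows the interior integral tends to $\sqrt c\cdot\frac{2}{\pi}\int_{-1}^1\sqrt{1-u^2}\,du=\sqrt c$, so $\chi_c\to 1$. The main obstacle is monotonicity: the cleanest route I see is to note $\chi_c=\bigl(\E\sqrt{\lambda_c}\bigr)^2$ with $\lambda_c\sim\mathrm{MP}(1/c)$ normalized so $\E\lambda_c=1$, and to show that as $c$ increases the MP law becomes more concentrated around $1$ (its variance equals $1/c$), driving $\E\sqrt{\lambda_c}$ monotonically up to its Jensen ceiling $1$. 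One can verify this rigorously either by differentiating the closed-form integral in $c$, or by giving an explicit coupling / stochastic-dominance argument (in the convex order) between MP laws of different shape parameters, combined with concavity of $\sqrt{\cdot}$. I expect this monotonicity step to be the only non-routine one.
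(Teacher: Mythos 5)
Your proof follows essentially the same route as the paper: both exploit the $V\otimes\bar V$-invariance of $H$ (the paper does this by explicitly twirling over the unitary group; you argue via the commutant of the conjugation representation, which is the same fact), both pin down the two scalars by the same trace pairings $\Tr(H\omega_N)=1$ and $\Tr H=\E\|Y\|_1^2$, both get the lower bound from $\|Y\|_1\geq\|Y\|_2$, and both pass to the limit via Marchenko--Pastur. If anything, you are more careful than the paper in two places: you address the upgrade from $\E\|Y\|_1/N$ to $\E\|Y\|_1^2/N^2$ via Gaussian concentration (the paper simply says the conclusion ``follows'' from the convergence of $\frac1N\sum s_i(Y)$), and you actually compute $\chi_1$ and the $c\to\infty$ limit, which the paper merely states. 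One small slip in your concentration step: the map $W\mapsto\|W\|_1$ on $M\times N$ matrices is $\sqrt N$-Lipschitz, not $1$-Lipschitz, with respect to the Frobenius norm (since $\|A\|_1\leq\sqrt{\mathrm{rank}(A)}\,\|A\|_2$), so rescaling gives $\mathrm{Var}(\|Y\|_1)=O(N/M)=O(1)$ rather than $O(1/M)$; this is still $o\bigl((\E\|Y\|_1)^2\bigr)=o(N^2)$, so your conclusion $\E\|Y\|_1^2=(\E\|Y\|_1)^2(1+o(1))$ is unaffected. Finally, you honestly flag the monotonicity of $c\mapsto\chi_c$ as the one step you have not completed; the paper asserts it without proof, so this is not a gap relative to the paper, though it remains an unproved claim in both.
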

\begin{proof}
The key to the proof is that the matrix $H$ is $U\otimes \overline U$-invariant, i.e.~for all unitary matrices $U \in \mathcal U_N$, 
$$(U \otimes \overline U) H (U \otimes \overline U)^* = \E(U|Y|U^*\otimes \overline U |\overline{Y}| \overline U^*)  = \E(|YU^*|\otimes |\overline{YU^*}|)= \E(|Y'|\otimes |\overline{Y'}|)  = H.$$
We can thus integrate the relation above over $U$, a procedure known as ``twirling'' in quantum information theory \cite{werner1989quantum}, which is yet another application of the Weingarten calculus: 
$$H = \int_{U \in \mathcal U_N} (U \otimes \overline U) H (U \otimes \overline U)^* \mathrm{d}U = \Tr(H \omega_N) \omega_N + \frac{\Tr[H (I_{N^2}-\omega_N)]}{N^2-1}(I_{N^2} - \omega_N).$$
In our case, we have 
\begin{align*}
\Tr(H \omega_N) &= N^{-1} \E \Tr |Y|^2 = N^{-1} \E \Tr Y^*Y = N^{-1} MN M^{-1} = 1 \\ 
\Tr H &= \E[ \Tr|Y| \Tr |\overline Y| ] = \E[(\Tr|Y|)^2] = \E \|Y\|_1^2,
\end{align*}
and the conclusion follows with the announced value of $\chi$. The inequality at fixed $M,N$ follows by writing
$$\chi_{M,N} = \frac{\E\|Y\|_1^2-1}{N^2-1} \geq \frac{\E\|Y\|_2^2-1}{N^2-1} = \frac{N-1}{N^2-1}.$$
The limiting case is a consequence of the formula for the  Marchenko-Pastur density, see \cite[Eq.~(3.1.1)]{bai2010spectral}:
$$\mathrm{d}\mathrm{MP}_c=\max (1-c,0)\delta_0+\frac{\sqrt{(b-x)(x-a)}}{2\pi x} \; \mathbf{1}_{[a,b]}(x) \, \mathrm{d}x,$$
with $a = (1-\sqrt c)^2$ and $b=(1+\sqrt c)^2$. Indeed, $M^{1/2}Y$ has standard Gaussian distribution, and thus
$$N^{-1}\left(M^{1/2}Y\right)^*\left(M^{1/2}Y\right) \to \mathrm{MP}_c,$$
where $\mathrm{MP}_c$ is the Marchenko-Pastur distribution of parameter $c$, and the convergence above holds in moments. Hence, 
$$\frac 1 N \sum_{i=1}^N s_i(Y) \to c^{-1/2} \int \sqrt x\  \mathrm{d}\mathrm{MP}_c(x),$$
from which the conclusion follows. 
\end{proof}

We now state our result, comparing the Schatten norms of our random quantum channel model with the corresponding norms of un-correlated Gaussian matrices. Note that the following result holds for fixed parameters $n,d,k$. 

\begin{proposition} \label{prop:compare-with-Gaussian}
    Consider a random quantum channel $\Phi:M_{d}(\CC) \to M_n(\CC)$ defined by a Haar-random isometry and let $F$ be the corresponding super-operator \eqref{eq:super-operator} associated with the channel $\Phi$.
Then, for any $p\geq 1$ and any $1\leq q \leq \infty$ we have
\begin{equation*}
\mathbb E \left\|F(I_{d^2}-\omega_{d})\right\|_q^p \leq  (2 / \chi_{nk,d})^p \,\mathbb E  \left\|\sum_{i= 1} ^k Y_i \otimes Z_i \right\|^p_q,
\end{equation*}
where $\{Y_i,Z_i\}_{i=1}^k$ are independent Ginibre random matrices of parameters $(n, d; (nk)^{-1})$, and $\chi$ are the constants from \eqref{eq:def-chi}.
\end{proposition}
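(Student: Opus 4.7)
The plan is to adapt Pisier's strategy from \cite{pisier2014quantum}: realize the Haar isometry as the polar part of a Gaussian matrix, use the ``twirled'' expectation of its positive part to bound a Schatten norm of $F(I_{d^2} - \omega_d)$ by one of a correlated Gaussian model, and finally decouple to obtain the independent model $\sum_i Y_i \otimes Z_i$.

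Concretely, I would sample $V$ as the polar part of an $nk \times d$ Ginibre matrix $G$ with entries of variance $(nk)^{-1}$: writing $G = V|G|$ with $|G| = (G^*G)^{1/2}$, the factors $V$ and $|G|$ are independent and $V$ is Haar distributed. The $n \times d$ blocks then satisfy $G_i = A_i |G|$, yielding the factorization
\[ F \cdot \bigl(|G| \otimes \overline{|G|}\bigr) = \sum_{i=1}^k G_i \otimes \overline{G_i}. \]
Lemma~\ref{lem:expected-tensor-positive-part} applied to $|G|$ then gives $\E_{|G|}(|G| \otimes \overline{|G|}) = \omega_d + \chi_{nk,d}\, P$ with $P := I_{d^2} - \omega_d$.

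Next, I would exploit the orthogonality $\omega_d P = 0$ to invert: $(\omega_d + \chi P)^{-1} = \omega_d + \chi^{-1} P$, whence the matrix identity
\[ FP = \chi_{nk,d}^{-1}\, F(\omega_d + \chi_{nk,d} P) \cdot P. \]
An application of H\"older's inequality $\|AB\|_q \leq \|A\|_q \|B\|_\infty$ (with $B = P$, $\|P\|_\infty = 1$) gives $\|FP\|_q \leq \chi_{nk,d}^{-1} \|F(\omega_d + \chi_{nk,d} P)\|_q$. Substituting $\omega_d + \chi_{nk,d} P = \E_{|G|}(|G| \otimes \overline{|G|})$, pulling this conditional expectation outside the norm by convexity (Jensen), raising to the $p$-th power, and using Jensen once more for $x \mapsto x^p$ together with the independence of $V$ and $|G|$, should give
\[ \E\|FP\|_q^p \;\leq\; \chi_{nk,d}^{-p}\, \E\bigl\|{\textstyle \sum_i G_i \otimes \overline{G_i}}\bigr\|_q^p. \]

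The remaining factor of $2$ should come from decoupling the correlated Gaussian sum $\sum_i G_i \otimes \overline{G_i}$ into the fully independent model $\sum_i Y_i \otimes Z_i$. Introducing an independent copy $\{\tilde G_i\}$ and setting $H^\pm_i := (G_i \pm \tilde G_i)/\sqrt{2}$ (which are themselves i.i.d.\ Ginibre and independent across signs), the polarization identity
\[ G_i \otimes \overline{G_i} - \tilde G_i \otimes \overline{\tilde G_i} = H^+_i \otimes \overline{H^-_i} + H^-_i \otimes \overline{H^+_i} \]
rewrites the symmetrized difference as a sum of two genuinely decoupled Gaussian matrices, each distributed as $\sum_i Y_i \otimes Z_i$ after identifying $\overline{H^\mp_i}$ with an independent Ginibre copy. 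The hard part I anticipate is handling the centering: $\E[\sum_i G_i \otimes \overline{G_i}]$ is a nonzero rank-one matrix, so passing back from the $\tilde G$-symmetrized difference to $\sum_i G_i \otimes \overline{G_i}$ without deteriorating the constant is delicate. This is exactly the rectangular analogue of the argument sketched in the appendix of \cite{pisier2014quantum}, and justifies the slight generalization of Lemma~\ref{lem:expected-tensor-positive-part} established just above.
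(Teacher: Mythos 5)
Your overall strategy --- polar-decompose a Ginibre matrix, apply Lemma~\ref{lem:expected-tensor-positive-part} to the conditional expectation of $|G| \otimes \overline{|G|}$, relate $F(I_{d^2}-\omega_d)$ to $\mathcal E[\sum_i G_i \otimes \overline{G_i}]$, apply Jensen, then decouple --- is exactly the paper's route. The algebraic identity $FP = \chi^{-1}F(\omega_d + \chi P)P$ with $P := I_{d^2}-\omega_d$ is a clean alternative to the paper's ``take the $(i,i)$ blocks and multiply by $I-\omega_d$,'' and gives the same factorization.

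The genuine gap is that you discard the projector $P$ one step too early, and this is precisely what creates the ``delicate centering'' problem you flag at the end. After writing $FP = \chi^{-1}F(\omega_d+\chi P)P$, you apply H\"older with $\|P\|_\infty = 1$ to drop $P$ and arrive, after Jensen, at $\mathbb E\|FP\|_q^p \leq \chi^{-p}\, \mathbb E\|\sum_i G_i \otimes \overline{G_i}\|_q^p$. But the decoupling step needs the summands to be centered, and $\mathbb E[G_i \otimes \overline{G_i}]$ is the nonzero rank-one matrix proportional to $|\Omega_n\rangle\langle\Omega_d|$. Once you have thrown away $P$, you cannot run the standard symmetrization: the first move in that argument is to write the centered sum as $\sum_i X_i - \mathbb E[\sum_i \tilde X_i]$ and pull the expectation outside the norm by Jensen, and this only recovers the symmetrized difference $\sum_i(X_i - \tilde X_i)$ (to which your nice polarization identity then applies) because the mean is zero. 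With a nonzero mean, passing from the symmetrized difference back to $\sum_i G_i \otimes \overline{G_i}$ costs you an additive error equal to the norm of that rank-one mean, which is $\Omega(n^{-1/2})$ and does \emph{not} get absorbed into the constant $2$.

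The fix, and what the paper does, is simply not to drop $P$: keep it through the Jensen step to get $\mathbb E\|FP\|_q^p \leq \chi^{-p}\, \mathbb E\|\sum_i(G_i \otimes \overline{G_i})P\|_q^p$, observe that $\mathbb E[(G_i\otimes\overline{G_i})P] = \mathbb E[G_i\otimes\overline{G_i}]\,(I_{d^2}-\omega_d) = 0$ because $\mathbb E[G_i\otimes\overline{G_i}]$ is proportional to $|\Omega_n\rangle\langle\Omega_d|$ and $\langle\Omega_d|(I-\omega_d) = 0$, and apply decoupling to this centered quantity. Only at the very end do you discard $P$ via $\|(\sum_i Y_i\otimes Z_i)P\|_q \leq \|\sum_i Y_i\otimes Z_i\|_q$. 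This gives the stated bound with constant $2/\chi_{nk,d}$ and no leftover centering term; the ``hard part'' you anticipated was self-inflicted and disappears.

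Two smaller notes. First, your proposed factor-of-$2$ accounting in the polarization step is correct, but you should say explicitly that $\sum_i H^+_i \otimes \overline{H^-_i}$ and $\sum_i H^-_i \otimes \overline{H^+_i}$ (both with the trailing $P$ you should have kept) are identically distributed as $\sum_i Y_i \otimes Z_i$ with independent Ginibre factors; then convexity of $x\mapsto x^p$ together with the triangle inequality gives the $2^p$ without any independence between the two summands. Second, the claim that $V$ and $|G|$ are independent in the rectangular polar decomposition does need a word: the paper cites the square case from \cite{hiai2000semicircle} and notes that the unitary invariance argument extends to isometries; you state it as a fact, which is fine, but in a write-up you should either cite or reproduce the one-line argument.
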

\begin{proof}
We will adapt some of the ideas present in the Appendix of \cite{pisier2014quantum}. 

Let $Y \in \mathrm{Gin}(nk, d; (nk)^{-1})$ be a Ginibre random matrix; we recall that this means that the entries of $Y$ are i.i.d.~complex Gaussian variables with mean $0$ and variance $1/(nk)$. Using the unitary invariance of the Gaussian distribution, we can write its polar decomposition $Y = V |Y|$, where $V \in M_{nk, d}(\mathbb C)$ is a \emph{Haar-distributed} random isometry and $|Y|\in M_{d}(\mathbb C)$ is positive semidefinite, such that $V$ and $|Y|$ are independent random matrices (see \cite[Lemma 4.3.10]{hiai2000semicircle} for a proof of the square case).

Let $\mathcal E$ be the conditional expectation operator with respect to the $\sigma$-algebra generated by isometry part $V$. We have 
\begin{equation*}
\mathcal E(Y\otimes \overline{Y}) = \mathcal E(V|Y|\otimes \overline{V|Y|})=(V \otimes \overline{V})\mathbb E(|Y|\otimes |\overline{Y}|).
\end{equation*}

Using Lemma \ref{lem:expected-tensor-positive-part}, we obtain
$$\mathcal E(Y\otimes \overline{Y}) = (V \otimes \overline V) \left[\omega_d + \chi_{nk,d}(I_{d^2}-\omega_d)\right].$$
Taking the $(i,i)$ block elements of each side and multiplying on the right with $I_{d^2}-\omega_d$ yields
$$\chi_{nk,d}^{-1}\mathcal E\left[ \sum_{i=1}^k Y_i \otimes \overline Y_i \right] (I_{d^2}-\omega_d) = F(I_{d^2}-\omega_d),$$
where $Y_1, \ldots, Y_k \in M_{n,d}(\CC)$ are the blocks of $Y$:
$$Y = \sum_{i=1}^k e_i \otimes Y_i.$$
Since $\|\cdot \|_q^p$ is a convex function, after applying Jensen's inequality, we get

\begin{equation*}
\mathbb E \left \| F(I_{d^2} - \omega_d) \right\|_q^p\leq \chi_{nk,d}^{-p} \mathbb E  \left\|  \sum_{i= 1} ^k (Y_i \otimes  \overline{Y_i})(I_{d^2} - \omega_d)  \right\|^p_q
\end{equation*}

We shall now use a decoupling argument from the Appendix of \cite{pisier2014quantum}, to go from the left-hand-side of the equation above to the expression in the statement. We state it here without proof, since there is no difference between the case of square matrices discussed in \cite{pisier2014quantum} and our (more general) situation involving rectangular blocks $Y_i$. we just note that one of the key ingredients of the proof is that, for all $i$, $\E(Y_i \otimes  \overline{Y_i})(I_{d^2} - \omega_d) = 0$.
\begin{lemma} Let $Y_1, \ldots, Y_k$ be independent $n\times d$ Ginibre matrices, and consider independent copies $Z_1, \ldots, Z_k$ having the same distributions. Then,
\begin{equation*}
\mathbb E  \left\|  \sum_{i= 1} ^k (Y_i \otimes  \overline{Y_i})(I_{d^2} - \omega_d)\right\|^p_q 
\leq 2^p \mathbb E  \left\|  \sum_{i= 1} ^k  (Y_i \otimes Z_i)(I_{d^2} - \omega_d)\right\|^p_q 
\end{equation*}
\end{lemma}

Using the lemma, we conclude
$$\mathbb E \left\|F(I_{d^2}-\omega_d)\right\|_q^p \leq  \left(\frac{2}{\chi_{nk,d}}\right)^p \,\mathbb E  \left\|\sum_{i= 1} ^k Y_i \otimes Z_i \right\|^p_q.$$
\end{proof}

\bigskip

We now move on to the second technical result of this section, a bound on the moments of the Gaussian model. 

\begin{proposition}\label{prop:bound-Gaussian-model}
Let $Y_1, \ldots, Y_k, Z_1, \ldots, Z_k$ be independent Ginibre random matrices of parameters $(n, d; (nk)^{-1})$, where $n,d,k$ are fixed integers. Then, for all even integers $p \geq 2$,

\begin{equation*}
\mathbb E  \left\| \sum_{i= 1} ^{k} Y_i \otimes  Z_i\right\|_\infty^p  \leq  n^{2} \left( \frac{(1+\sqrt{\lambda})^2}{\sqrt k} +\epsilon + \beta\sqrt{\frac p n} \right)^p
 \end{equation*}
 where $\epsilon$ and $\beta$ are functions of $n,d,k$ with the property that, in the asymptotic regime \eqref{eq:asymptotic-regime}, $\epsilon(n) \to 0$ and $\beta(n)$ is bounded, as $n \to \infty$.

\end{proposition}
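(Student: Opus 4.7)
The plan is to combine a sharp bound on the expected operator norm with Gaussian concentration around the mean, and then integrate to produce the moment bound. Write $W := \sum_{i=1}^k Y_i\otimes Z_i$. The shape of the claimed bound---$n^2$ times a $p$-th power with an explicit $\sqrt{p/n}$ correction---is precisely what one obtains by integrating a tail of the form $\mathbb{P}(\|W\|_\infty > \mu+\epsilon+t) \leq n^2 e^{-cnt^2}$: the $n^2$ prefactor absorbs a dimensional (union-bound) penalty, while $\beta\sqrt{p/n}$ is the $L^p$-width of a centered sub-Gaussian of variance proxy $\sim 1/n$.

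For the mean bound $\mathbb{E}\|W\|_\infty \leq (1+\sqrt\lambda)^2/\sqrt k + \epsilon(n)$, I would condition on $Y = (Y_i)_{i=1}^k$ and view $W = \sum_{i,a,b}(Z_i)_{ab}\,(Y_i \otimes E_{ab})$ as a matrix-valued Gaussian series in the $Z$-entries. A sharp noncommutative Khintchine-type inequality (as in the Appendix of \cite{pisier2014quantum}, or a log-free sharp matrix concentration \emph{à la} Bandeira--Boedihardjo--van Handel) bounds $\mathbb{E}_Z\|W\|_\infty \mid Y$ by the row and column covariance norms, which evaluate to $(d/(nk))\|\sum_i Y_iY_i^*\|_\infty$ and $(1/k)\|\sum_i Y_i^*Y_i\|_\infty$. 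These Wishart-type sums are in fact squares of rectangular Ginibre matrices obtained by stacking the $Y_i$'s horizontally (resp.\ vertically), and their top eigenvalues concentrate at the Marchenko--Pastur edges $(1+\sqrt{k\lambda})^2/k$ (resp.\ $(1+\sqrt{\lambda/k})^2$). Taking expectation over $Y$ and assembling the two factors yields the claimed constant $(1+\sqrt\lambda)^2/\sqrt k$ up to a vanishing error.

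For the concentration step, $\|W\|_\infty$ is a Lipschitz function of the $2knd$ standard complex Gaussians underlying $(Y_i, Z_i)$, with local Lipschitz constant $L = O((nk)^{-1/2})$ on the high-probability event that every $\|Y_i\|_\infty, \|Z_i\|_\infty$ stays within a small neighbourhood of its Marchenko--Pastur value $(1+\sqrt\lambda)/\sqrt k$. Borell--TIS then gives $\mathbb{P}(\|W\|_\infty > \mathbb{E}\|W\|_\infty + t) \leq 2 e^{-cnkt^2}$; the complementary bad-event probability is exponentially small and absorbed in the $n^2$ prefactor via a crude union bound. Integrating $\mathbb{E}\|W\|_\infty^p = \int_0^\infty p s^{p-1}\mathbb{P}(\|W\|_\infty > s)\,ds$---splitting at $s = \mu + \epsilon$ and using the Gaussian identity $\int_0^\infty t^{p-1}e^{-cnt^2}\,dt = O((p/n)^{p/2})$---then produces the stated moment bound, with $\beta$ depending only on $\lambda$, $k$, and the universal constant in Borell--TIS.

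The main obstacle is obtaining the sharp constant $(1+\sqrt\lambda)^2/\sqrt k = (1+\lambda+2\sqrt\lambda)/\sqrt k$ in the mean bound. A direct off-the-shelf noncommutative Khintchine or matrix Bernstein only produces $(1+\lambda)/\sqrt k+O(1/k)$, recovering the diagonal contributions but missing the cross term $2\sqrt\lambda/\sqrt k$. Closing this gap requires either iterating the Gaussian-series argument on both sides---so as to extract the full Marchenko--Pastur edge contribution from each of $\sum_i Y_iY_i^*$ and $\sum_i Y_i^*Y_i$ jointly---or a direct moment computation recognizing the leading non-crossing pair-partition sum as the $m$-th moment of a squared Marchenko--Pastur distribution with top edge $(1+\sqrt\lambda)^4/k$; this is essentially an adaptation to the rectangular Ginibre setting of the computations from the Appendix of \cite{pisier2014quantum}.
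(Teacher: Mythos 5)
Your proposal takes a genuinely different route from the paper: you want to estimate $\mathbb{E}\|W\|_\infty$ directly by a conditional noncommutative-Khintchine bound on $Z$ given $Y$, and then promote this to an $L^p$ bound by Borell--TIS concentration and integration of the tail. The paper instead establishes, by a Wick-calculus and positivity argument, the clean combinatorial inequality
$\mathbb E\,\mathrm{Tr}\,|X|^p \leq k^{p/2}\bigl(\mathbb E\,\mathrm{Tr}\,|Y|^p\bigr)^2$
for a single rectangular Ginibre $Y$ of parameters $(n,d;(nk)^{-1})$, then chains $\|X\|_\infty^p\le\mathrm{Tr}\,|X|^p$ and $\mathrm{Tr}\,|Y|^p\le \min(n,d)\,\|Y\|_\infty^p$, reducing everything to Gaussian concentration for $\|Y\|_\infty$, which is globally $1$-Lipschitz in the Gaussian entries.

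Three concrete problems with your scheme. First, the decomposition ``mean $+$ tail'' needs the mean bound to be dimension-free \emph{inside} the $p$-th power: the $n^2$ prefactor sits outside, so $n^{2/p}\to 1$ but any $\sqrt{\log n}$ from a generic noncommutative Khintchine does not disappear. You therefore need a log-free bound, and the log-free results (Bandeira--Boedihardjo--van Handel, or Tropp's second-order theory) come with structural hypotheses on the coefficient matrices; here those coefficients are the random $Y_i$, so verifying these hypotheses conditionally and then integrating over $Y$ is a substantial piece of work you have not carried out. Second, Borell--TIS requires a \emph{global} Lipschitz bound, but $W=\sum Y_i\otimes Z_i$ is bilinear in the underlying Gaussians, so $\|W\|_\infty$ is only locally Lipschitz; the truncation/localization needed to repair this (and to feed the bad event into the $n^2$ prefactor) is exactly the kind of technical step that the paper sidesteps by only ever invoking concentration for a single Ginibre. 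Third, your self-diagnosed ``gap'' (that Khintchine only yields $(1+\lambda)/\sqrt k+O(1/k)$ rather than $(1+\sqrt\lambda)^2/\sqrt k$) is not actually a problem: since $(1+\lambda)/\sqrt k<(1+\sqrt\lambda)^2/\sqrt k$, a smaller constant still proves the stated inequality. The real obstacle is not matching the constant but proving a log-free, concentration-compatible bound at all, and this is where your argument is incomplete. If you want to pursue your route, the decoupled moment computation you sketch at the end is the right fallback --- but at that point you have essentially rediscovered the paper's proof, whose key observation is that, after decoupling, each Wick term in $\mathbb E\,\mathrm{Tr}\,|X|^p=\sum_{i,j}\bigl(\mathbb E\,\mathrm{Tr}\,Y^*_{i_1}Y_{j_1}\cdots Y^*_{i_m}Y_{j_m}\bigr)^2$ is nonnegative, which lets you factor out $\sup_{i,j}\mathbb E\,\mathrm{Tr}\,Y^*_{i_1}Y_{j_1}\cdots\le\mathbb E\,\mathrm{Tr}\,|Y|^p$ and evaluate the remaining sum exactly as $k^{p/2}\,\mathbb E\,\mathrm{Tr}\,|Y|^p$.
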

\begin{proof}
	We follow the proof of \cite[Theorem 16.6]{pisier2012grothendieck}. Let $Y \in \mathrm{Gin}(n,d;(nk)^{-1})$ be another Ginibre matrix as in the statement. By \cite[Theorem 5.11]{bai2010spectral}) we know that, almost surely, in the asymptotic regime \eqref{eq:asymptotic-regime}, 
	$$\lim_{n\to\infty} \|Y\|_\infty=\frac{1+\sqrt{\lambda}}{\sqrt k}.$$
	Let us then define the function $\epsilon = \epsilon(n,d,k)$ by
$$\E\|Y\|_\infty = \frac{1+\sqrt{\lambda}}{\sqrt k}+\epsilon,$$
	so that $\epsilon \to 0$ when $n \to \infty$ in the regime \eqref{eq:asymptotic-regime}. Again by concentration of measure arguments (see \cite[Chapter 2]{pisier1986probabilistic}) the function $\beta = \beta(n,d,k,p)$ such that for any $n\ge 1$ and $p\ge 2$ we have 
	\begin{equation}\label{eq:moment-norm-Y-epsilon-beta}
	(\E\|Y\|_\infty^p)^{1/p} =  \E\|Y\|_\infty + \beta \sqrt{\frac p n} \le \frac{1+\sqrt{\lambda}}{\sqrt k} +\epsilon + \beta\sqrt{\frac p n}
	\end{equation}
	is such that, in the regime \eqref{eq:asymptotic-regime}, $\beta$ is bounded uniformly in $p$, as $n \to \infty$. 

	Let us write 
	$$X:= \sum_{i=1}^k Y_i \otimes Z_i \in M_{n^2, d^2}(\CC).$$
	We claim that one has, for any even integer $p\ge 2$,
	\begin{equation}\label{eq:claim-X-k-Y}
	\E \Tr |X|^p  \le k^{p/2} (\E \Tr |Y|^p)^2.
	\end{equation}
	Let us defer the proof of the claim for later, and now use it together with \eqref{eq:moment-norm-Y-epsilon-beta} to get:
	\begin{align*}
	\E\|X\|_\infty^p &\leq \E\|X\|_p^p \leq k^{p/2} (\E\|Y\|_p^p)^2 \leq k^{p/2} (\min(n,d) \E\|Y\|_\infty^p)^2 \\
	&\leq n^2k^{p/2}\left( \frac{1+\sqrt{\lambda}}{\sqrt k} +\epsilon + \beta\sqrt{\frac p n} \right)^{2p}\\
	&= n^2 \left( \frac{(1+\sqrt{\lambda})^2}{\sqrt k} +\epsilon' + \beta'\sqrt{\frac p n} \right)^p,
	\end{align*}
	which is the desired statement, for some modified functions $\epsilon',\beta'$ enjoying the same asymptotic properties.

We verify now the claim \eqref{eq:claim-X-k-Y}. Let $p=2m$. We develop
$$	\E\Tr|X|^p =\E\Tr(X^*X)^m = \sum_{i_1, \ldots, i_m,j_1,\ldots, j_m =1}^k ( \E\Tr Y^*_{i_1}Y_{j_1}\cdots Y^*_{i_m}Y_{j_m})^2.$$
Note that, using the Wick formula to evaluate the Gaussian expectation,  the only non-vanishing terms in this sum correspond to certain pairings that guarantee that $\E\Tr Y^*_{i_1}Y_{j_1}\cdots Y^*_{i_m}Y_{j_m} \geq 0$:
$$\E\Tr Y^*_{i_1}Y_{j_1}\cdots Y^*_{i_m}Y_{j_m} = (nk)^{-m}\sum_{\alpha \in \mathcal S_p \, : \, j = i \circ \alpha} n^{\# \alpha} d^{\#(\alpha^{-1}\gamma)},$$
where $\gamma = (1\, 2\, \cdots\, m) \in \mathcal S_m$ is the full-cycle permutation. Moreover, by H\"older's inequality for the trace, we have, for all $m$-tuples $i$ and $j$,
$$|\E\Tr Y^*_{i_1}Y_{j_1}\cdots Y^*_{i_m}Y_{j_m}| \leq \E \|Y^*_{i_1}Y_{j_1}\cdots Y^*_{i_m}Y_{j_m}\|_1 \leq \E \|Y\|_{2m}^{2m}= \E\Tr|Y|^p.$$
From these observations, we find
\begin{equation}\label{eq:final-ineq-X-Y}
\E \Tr |X|^p \leq \E \Tr |Y|^p \sum_{i,j} \E\Tr Y^*_{i_1}Y_{j_1}\cdots Y^*_{i_m} Y_{j_m},
\end{equation}
where we have used the crucial positivity property mentioned before. Evaluating the last sum gives
\begin{align*}
\sum_{i,j} \E\Tr Y^*_{i_1}Y_{j_1}\cdots Y^*_{i_m} Y_{j_m}&= k^m \E\Tr \left(\sum_{i_1=1}^k k^{-1/2}Y_{i_1}\right)^* \cdots \left(\sum_{j_m=1}^k k^{-1/2}Y_{j_m}\right) \\
&= k^m\E\Tr (\hat Y^*\hat Y)^m= k^{p/2}\E\Tr|\hat Y|^p,
\end{align*}
where $\hat Y := k^{-1/2} \sum_{i=1}^k Y_i$ has the same distribution as $Y$; hence \eqref{eq:final-ineq-X-Y} implies \eqref{eq:claim-X-k-Y}, and the proof is complete.
    \end{proof}

We can now state the main result of this section, an almost-sure upper bound on the norm of the super-operators $F_n$, restricted on the space orthogonal to the maximally entangled state $\Omega_{d_n}$. 

\begin{theorem} \label{thm:bound-s2}

Consider a sequence of random quantum channels $\Phi_n:M_{d_n}(\CC) \to M_n(\CC)$ defined by  Haar-random isometries and let $F_n$ be the corresponding super-operators \eqref{eq:super-operator} associated with the channels $\Phi_n$, in the asymptotic regime \eqref{eq:asymptotic-regime}.
Then, for any $p\geq 1$ 
\begin{equation*}
\left(\mathbb E \left\|F(I_{d_n^2}-\omega_{d_n})\right\|_\infty^p\right)^{1/p} \leq n^{2/p} \left( g_{k,\lambda} + \epsilon + \beta \sqrt{\frac p n}\right),
\end{equation*}
where $\epsilon = \epsilon(n) \to 0$ and $\beta = \beta(n)$ is bounded as $n \to \infty$, and (see \eqref{eq:def-chi-c} for the definition of $\chi_\cdot$)
\begin{equation}\label{eq:def-g-k-lambda}
    g_{k,\lambda}:=  \frac{2(1+\sqrt \lambda)^2}{\chi_{k/\lambda}\sqrt k}.
\end{equation}
In particular, we have that, almost surely, 
\begin{equation}\label{eq:almost-sure-UB}
\limsup_{n \to \infty} \left\|F_n(I_{d_n^2}-\omega_{d_n})\right\|_\infty \leq g_{k,\lambda} .
\end{equation}
\end{theorem}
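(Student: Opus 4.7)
The plan is to chain the two main propositions of this section together, interpreted in the asymptotic regime \eqref{eq:asymptotic-regime}. Applying Proposition \ref{prop:compare-with-Gaussian} with $q=\infty$ yields, for every $p\geq 1$,
$$\mathbb E \|F_n(I_{d_n^2}-\omega_{d_n})\|_\infty^p \leq \left(\frac{2}{\chi_{nk,d_n}}\right)^p \mathbb E \left\|\sum_{i=1}^k Y_i \otimes Z_i\right\|_\infty^p,$$
where the $Y_i,Z_i$ are independent Ginibre matrices of parameters $(n,d_n;(nk)^{-1})$. In the regime $d_n\sim\lambda n$ the aspect ratio $nk/d_n$ tends to $k/\lambda\geq 1$, and Lemma \ref{lem:expected-tensor-positive-part} gives $\chi_{nk,d_n}\to \chi_{k/\lambda}$; the deviation $|\chi_{nk,d_n}-\chi_{k/\lambda}|$ contributes only an $o(1)$ error, which I would absorb into the function $\epsilon(n)$. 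For even $p$, Proposition \ref{prop:bound-Gaussian-model} bounds the Gaussian moment by $n^2\big((1+\sqrt\lambda)^2/\sqrt k + \epsilon + \beta\sqrt{p/n}\big)^p$; taking $p$-th roots and multiplying by $2/\chi_{nk,d_n}$ produces exactly the claimed inequality with $g_{k,\lambda}$ as in \eqref{eq:def-g-k-lambda}.

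For arbitrary real $p\geq 1$, I would reduce to the even case by monotonicity of $L^p$ norms: if $p'$ is the smallest even integer with $p'\geq p$, then $(\mathbb E X^p)^{1/p}\leq (\mathbb E X^{p'})^{1/p'}$, and the corresponding $p'$-bound is of exactly the same form, the replacement $p\leadsto p'\leq p+2$ affecting only inessential constants inside the error functions $\epsilon,\beta$.

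For the almost-sure statement \eqref{eq:almost-sure-UB}, I would invoke Borel--Cantelli. Fixing $\delta>0$ and applying Markov's inequality to the $p$-th moment bound yields
$$\mathbb P\!\left[\|F_n(I_{d_n^2}-\omega_{d_n})\|_\infty \geq g_{k,\lambda}+\delta\right] \leq n^2 \left(\frac{g_{k,\lambda}+\epsilon(n)+\beta(n)\sqrt{p/n}}{g_{k,\lambda}+\delta}\right)^{p}.$$
Choosing a deterministic sequence $p=p_n\to\infty$ with $p_n/n\to 0$ and $p_n/\log n\to\infty$---for instance $p_n=\lceil(\log n)^2\rceil$---the bracketed ratio is eventually bounded by a fixed $r<1$, since $\epsilon(n)+\beta(n)\sqrt{p_n/n}\to 0$, while the prefactor $n^2$ is swamped by $r^{p_n}=e^{-|\log r|(\log n)^2}$. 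Hence the right-hand side is summable in $n$, Borel--Cantelli gives the event only finitely often almost surely, and letting $\delta$ shrink along a countable sequence yields \eqref{eq:almost-sure-UB}.

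The main obstacle is the bookkeeping of uniformity of the error constants $\epsilon(n),\beta(n)$ from Proposition \ref{prop:bound-Gaussian-model} along the sequence $p=p_n$. The proposition already guarantees that $\beta(n)$ is bounded uniformly in $p$---this is the content of the Gaussian concentration of measure used inside its proof---so what remains is to verify that neither the $\chi_{nk,d_n}\to\chi_{k/\lambda}$ replacement nor the even-to-general-$p$ interpolation introduces a hidden $p$-dependence. Both steps are manifestly $p$-independent once carefully tracked, so no new Gaussian concentration input is needed beyond what is already stated.
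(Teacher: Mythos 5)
Your proof is correct and follows the same route as the paper's: apply Proposition \ref{prop:compare-with-Gaussian} with $q=\infty$, feed in Proposition \ref{prop:bound-Gaussian-model}, absorb the $\chi_{nk,d_n}\to\chi_{k/\lambda}$ error into $\epsilon$, and finish the almost-sure statement via Markov plus Borel--Cantelli with a suitably growing $p_n$ (the paper picks $p=(5/\delta)\log n$, you pick $\lceil(\log n)^2\rceil$; both work). You are in fact slightly more careful than the paper in spelling out the reduction from general $p\geq 1$ to even $p$ via monotonicity of $L^p$ norms, which the paper leaves implicit even though Proposition \ref{prop:bound-Gaussian-model} is stated only for even $p$.
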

\begin{proof}
The first claim is obtained by combining Proposition \ref{prop:compare-with-Gaussian} for $q=\infty$ and Proposition \ref{prop:bound-Gaussian-model}, where the error $\chi_{nk,d_n}-\chi_\lambda$ is absorbed in $\epsilon$. To obtain the almost-sure bound on the norm, apply Borel-Cantelli after noticing that, for $\delta > 0$, the probability
$$\mathbb{P}\left[ \left\|F_n(I_{d_n^2}-\omega_{d_n})\right\|_\infty \geq (1+\delta) g_{k,\lambda} \right] \leq n^2 (1+\delta)^{-p}\left(1 + \epsilon + \beta \sqrt{\frac p n} \right)^p $$
is an $O(n^{-2})$ for, say, $p=(5/\delta) \log n$.
\end{proof}
\begin{remark}\label{rem:g-k-lambda-asymptotics-k-large}
Assuming that $\lambda$ is being kept fixed, let us analyze the asymptotic behaviour of $g_{k,\lambda}$ as $k \to \infty$; note that this situation corresponds to a global asymptotic regime where $1 \ll k \ll n$ (compare with \eqref{eq:asymptotic-regime}). We start from ``law of large numbers'' for the Marchenko-Pastur distribution which states that, as the parameter $c$ grows, 
$$D_{c^{-1}}\mathrm{MP_c} \to \delta_1.$$
Hence, when $k/\lambda \to \infty$, we have $\chi_{k/\lambda} \sim 1$. Thus, the behaviour of $g_{k,\lambda}$ as $\lambda$ is fixed and $k \to \infty$ is given by
$$g_{k,\lambda} \sim \frac{2(1+\sqrt \lambda)^2}{\sqrt{k}}.$$
\end{remark}

\section{Spectral gap of a random quantum channel}\label{sec:spectral-gap}

We state now the first main result of this work, an asymptotic lower bound on the singular value gap for the sequence of super-operators $F_n$.

\begin{theorem} \label{thm:gap-sv}
Consider a sequence of random quantum channels $\Phi_n:M_{d_n}(\CC) \to M_n(\CC)$ defined by  Haar-random isometries and let $F_n$ be the corresponding super-operators \eqref{eq:super-operator} associated with the channels $\Phi_n$, in the asymptotic regime \eqref{eq:asymptotic-regime}. Then, almost surely as $n \to \infty$, we have the following gap between the largest two singular values of $F_n$:
\begin{equation}\label{eq:gap-sv}
\liminf_{n \to \infty} \left[ s_1(F_n) - s_2(F_n)\right] \geq \sqrt{\lambda + \frac 1 k - \frac{\lambda}{k^2}} -  g_{k,\lambda},
\end{equation}
where the constant $g_{k,\lambda}$ was defined in \eqref{eq:def-g-k-lambda}.
\end{theorem}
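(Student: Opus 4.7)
The plan is to derive the gap bound directly from the two technical results already established: the asymptotic lower bound on $\|F_n\|_\infty = s_1(F_n)$ from Theorem \ref{thm:bound-s1}, and the asymptotic upper bound on the compressed operator norm $\|F_n(I_{d_n^2}-\omega_{d_n})\|_\infty$ from Theorem \ref{thm:bound-s2}. The missing bridge is to convert the compressed norm into a bound on the second singular value $s_2(F_n)$, which I will obtain from the Courant--Fischer min--max characterization.

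Concretely, the min--max principle for singular values states that, for any operator $A$,
\begin{equation*}
s_2(A) \;=\; \min_{\substack{V \subset \CC^{d_n^2} \\ \dim V = 1}} \; \max_{\substack{x \perp V \\ \|x\|=1}} \|Ax\|.
\end{equation*}
Specializing to the one-dimensional subspace $V = \mathrm{span}(\Omega_{d_n})$, whose orthogonal projector is $I_{d_n^2}-\omega_{d_n}$, yields the deterministic bound
\begin{equation*}
s_2(F_n) \;\leq\; \max_{\substack{x \perp \Omega_{d_n} \\ \|x\|=1}} \|F_n x\| \;=\; \bigl\| F_n (I_{d_n^2}-\omega_{d_n}) \bigr\|_\infty.
\end{equation*}
Thus $s_1(F_n) - s_2(F_n) \geq \|F_n\|_\infty - \|F_n(I_{d_n^2}-\omega_{d_n})\|_\infty$ holds pointwise on the probability space.

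To conclude, I will take $\liminf$ on both sides and invoke the elementary inequality $\liminf(a_n - b_n) \geq \liminf a_n - \limsup b_n$. By Theorem \ref{thm:bound-s1} (see \eqref{eq:almost-sure-LB}), almost surely $\liminf_n \|F_n\|_\infty \geq \sqrt{\lambda + 1/k - \lambda/k^2}$, and by Theorem \ref{thm:bound-s2} (see \eqref{eq:almost-sure-UB}), almost surely $\limsup_n \|F_n(I_{d_n^2}-\omega_{d_n})\|_\infty \leq g_{k,\lambda}$. Intersecting these two almost-sure events produces the claimed bound \eqref{eq:gap-sv}.

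There is really no main obstacle at this stage: all of the analytical difficulty has been front-loaded into the Weingarten moment computation of Proposition \ref{prop:limit-overlap} (which identifies $\Omega_{d_n}$ as an approximate top right singular vector) and into the Gaussian comparison plus Marchenko--Pastur analysis behind Theorems \ref{thm:bound-s2} and Proposition \ref{prop:bound-Gaussian-model}. The only subtlety worth a brief remark is the choice of the test subspace $V = \mathrm{span}(\Omega_{d_n})$: this is the optimal choice, because Proposition \ref{prop:limit-overlap} shows that $\Omega_{d_n}$ asymptotically saturates the norm of $F_n$, so compressing onto its orthogonal complement removes precisely the singular direction responsible for $s_1$.
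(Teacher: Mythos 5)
Your proof is correct and follows essentially the same route as the paper: the paper obtains the key deterministic inequality $s_1(F_n) - s_2(F_n) \geq \|F_n\|_\infty - \|F_n(I-\omega_{d_n})\|_\infty$ by citing a singular-value perturbation result (Bhatia, Problem III.6.4), whereas you derive the identical inequality $s_2(F_n) \leq \|F_n(I-\omega_{d_n})\|_\infty$ directly from Courant--Fischer; the two are interchangeable. The remainder of your argument (combining Theorems \ref{thm:bound-s1} and \ref{thm:bound-s2} via $\liminf(a_n - b_n) \geq \liminf a_n - \limsup b_n$ on the intersection of the two almost-sure events) matches the paper's conclusion exactly.
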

\begin{proof}
The result follows from Theorems \ref{thm:bound-s1} and \ref{thm:bound-s2} (more precisely from equations \eqref{eq:almost-sure-LB} and \eqref{eq:almost-sure-UB}) after using the perturbation result for singular values from \cite[Problem III.6.4]{bhatia1997matrix}:
$$s_1(F) - s_2(F) \geq s_1(F) - s_1(F(I-\omega)) = \|F\|_\infty - \|F(I-\omega)\|_\infty.$$
\end{proof}
\begin{remark}
The lower bound from \eqref{eq:gap-sv} is not really explicit, since the quantity $\chi_{k/\lambda}$ appearing in the definition of $g_{k,\lambda}$ is defined, for general $k,\lambda$ in terms of the elliptic integral \eqref{eq:def-chi-c}. However, we can lower bound it by an explicit formula, as follows:
\begin{align*}
\liminf_{n \to \infty} \left[ s_1(F_n) - s_2(F_n)\right] &\geq \sqrt{\lambda + \frac 1 k -\frac{\lambda}{k^2}} -  \frac{2(1+\sqrt \lambda)^2}{\chi_{k/\lambda}\sqrt k}\\
&\geq \sqrt{\lambda + \frac 1 k -\frac{\lambda}{k^2}} -  \frac{2(1+\sqrt \lambda)^2}{\chi_1 \sqrt k} \\
&=  \sqrt{\lambda + \frac 1 k -\frac{\lambda}{k^2}} -  \frac{9\pi^2(1+\sqrt \lambda)^2}{32 \sqrt k}.
\end{align*}
\end{remark}
\begin{remark}
One can also obtain a rather simple upper bound on the asymptotic singular value gap as follows: 
$$s_1(F) - s_2(F) \leq \|F\|_\infty \leq \|F\omega\|_\infty + \|F(I-\omega)\|_\infty.$$
But recall that in the case of the super-operator sequence $F_n$, $\|F_n\omega_n\|_\infty$ is precisely the square root of the quantity $f_n$ studied in Proposition \ref{prop:limit-overlap}. It follows that
$$\limsup_{n \to \infty} \left[ s_1(F_n) - s_2(F_n)\right] \leq \sqrt{\lambda + \frac 1 k - \frac{\lambda}{k^2}} +  g_{k,\lambda}.$$
\end{remark}

We plot in Figure \ref{fig:gap-sv-lambda-1} the bound obtained above in the case where $\lambda=1$. Notice that the bound approaches $1$ as $k \to \infty$ and that, for small values of $k$, the bound is trivial (negative); our bound becomes non-trivial for $k\approx 63.52$. 

\begin{figure}[ht]
\centering
\includegraphics[scale=.6]{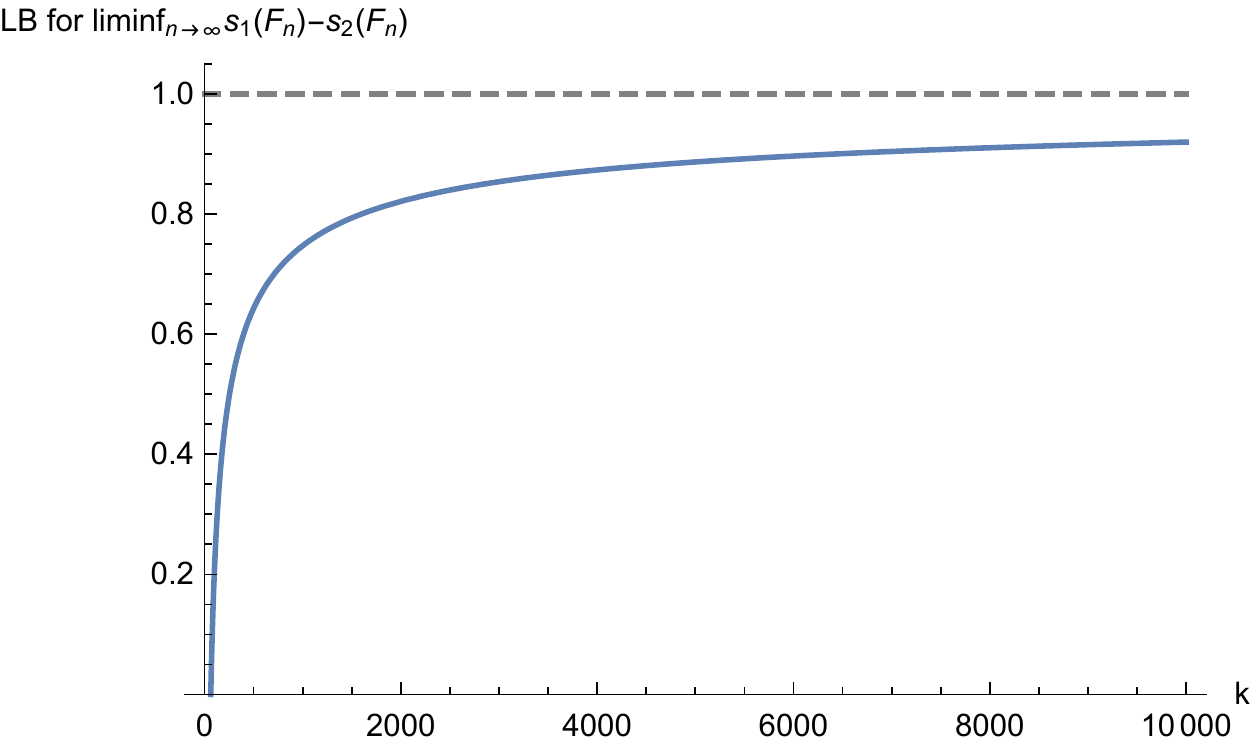} \qquad \includegraphics[scale=.6]{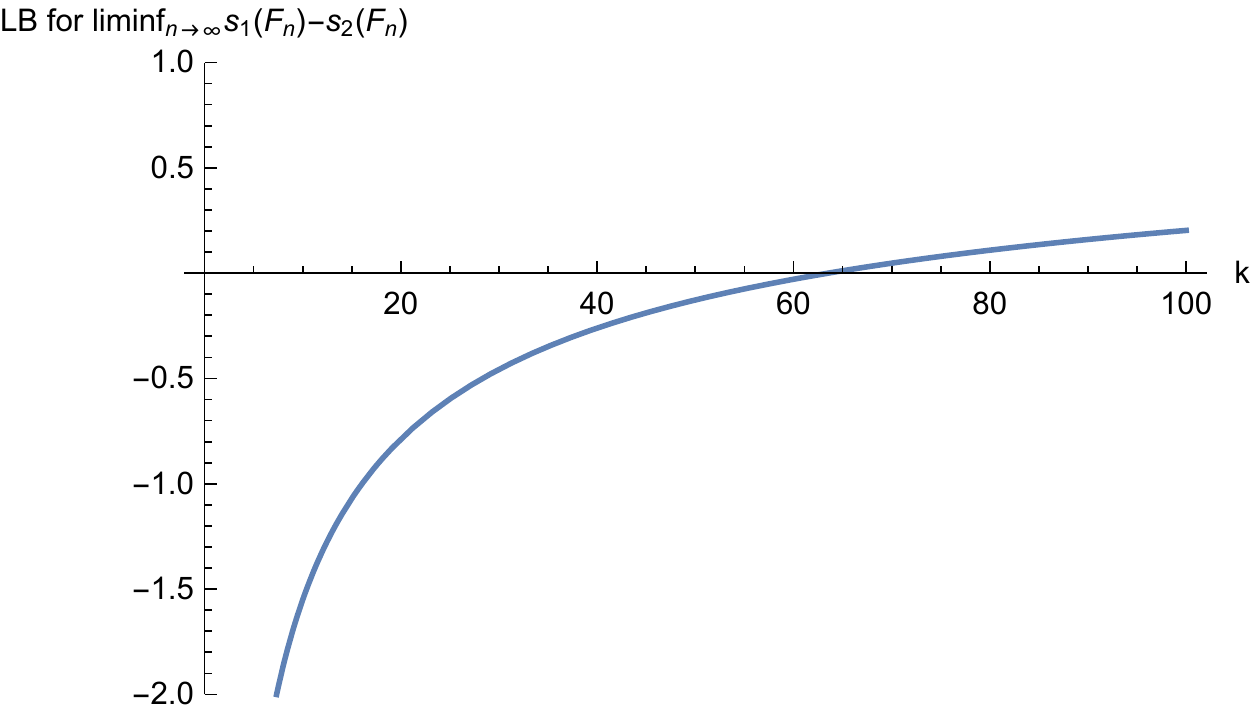}
\caption{Plot of the asymptotic lower bound on the singular value gap as a function of $k$, in the case where $\lambda = 1$. Left panel: large values of $k$; right panel: the regime where the lower bound becomes non trivial, i.e.~$k \gtrsim 63.52$.}
\label{fig:gap-sv-lambda-1}
\end{figure}

In the case where $d_n = n$,which corresponds to $\lambda =1$ in \eqref{eq:asymptotic-regime}, the super-operators $F_n$ are square matrices (of size $n^2$), and we can analyze their spectral gap. Note that the largest eigenvalue, called the Perron-Frobenius eigenvalue, is equal to 1 in this case \cite{evans1978spectral}, lower bounds on the spectral gap correspond to upper bounds on the modulus of the second eigenvalue. 

\begin{theorem} \label{thm:gap-ev}
Consider a sequence of random quantum channels $\Phi_n:M_{n}(\CC) \to M_n(\CC)$ defined by  Haar-random isometries and let $F_n$ be the corresponding super-operators \eqref{eq:super-operator} associated with the channels $\Phi_n$, in the asymptotic regime \eqref{eq:asymptotic-regime}. Then, almost surely as $n \to \infty$, the second largest (in absolute value) eigenvalue of $F_n$ is asymptotically upper bounded:
\begin{equation}\label{eq:gap-ev}
\limsup_{n \to \infty} |\lambda_2(F_n)| \leq \left(\sqrt{1 +  \frac{k-1}{k^2}} + g_{k,1}\right)g_{k,1}.
\end{equation}
\end{theorem}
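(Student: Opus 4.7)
The plan is to combine the almost-sure control on the top two singular values of $F_n$ established in Sections \ref{sec:LB}--\ref{sec:UB} with the fact that the spectral radius of every CPTP map equals one, via Weyl's multiplicative majorant theorem. That theorem says that for any square matrix $M$ with eigenvalues $\lambda_1(M), \lambda_2(M), \ldots$ (ordered by decreasing modulus) and singular values $s_1(M) \geq s_2(M) \geq \cdots$, one has $|\lambda_1(M)\lambda_2(M)| \leq s_1(M)s_2(M)$. Since $\Phi_n$ is almost surely a quantum channel (and hence admits an invariant state) and $d_n = n$, the super-operator $F_n$ has spectral radius exactly one, so $|\lambda_1(F_n)| = 1$ on a full-probability event. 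Weyl's inequality then reduces to $|\lambda_2(F_n)| \leq s_1(F_n) s_2(F_n)$.

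It remains to bound the two singular values in the asymptotic regime. For $s_2(F_n)$, I would recycle the rank-one perturbation argument used in Theorem \ref{thm:gap-sv}: since $F_n \omega_n$ has rank one, $s_2(F_n) \leq s_1(F_n(I_{n^2} - \omega_n)) = \|F_n(I_{n^2} - \omega_n)\|_\infty$, which by Theorem \ref{thm:bound-s2} is almost surely at most $g_{k,1} + o(1)$. For $s_1(F_n) = \|F_n\|_\infty$ the triangle inequality for the operator norm gives
\[
s_1(F_n) \leq \|F_n \omega_n\|_\infty + \|F_n(I_{n^2} - \omega_n)\|_\infty = \sqrt{f_n} + \|F_n(I_{n^2} - \omega_n)\|_\infty,
\]
where the equality uses that $\omega_n$ is the rank-one projector onto the unit vector $\Omega_n$, so $\|F_n \omega_n\|_\infty = \|F_n \Omega_n\| = \sqrt{f_n}$. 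Proposition \ref{prop:limit-overlap}, upgraded to almost-sure convergence by the Borel-Cantelli argument already used in the proof of Theorem \ref{thm:bound-s1}, forces $\sqrt{f_n} \to \sqrt{1 + (k-1)/k^2}$ almost surely, and the remaining term is once more handled by Theorem \ref{thm:bound-s2}.

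Assembling these bounds on the intersection of the relevant full-probability events yields the stated inequality for $\limsup_n |\lambda_2(F_n)|$. I do not expect any substantial obstacle: the analytic content is already encoded in the singular-value estimates of Sections \ref{sec:LB}--\ref{sec:UB}, and the new ingredients are only Weyl's majorant and the deterministic observation that the CPTP property pins $|\lambda_1(F_n)|$ to one. A remark: a nominally tighter bound $|\lambda_2(F_n)| \leq g_{k,1}$ could in fact be extracted from the left-eigenvector identity $\langle I_n | F_n = \langle I_n |$ (valid since $d_n = n$), which forces every right eigenvector of $F_n$ with eigenvalue $\mu \neq 1$ to be orthogonal to $\Omega_n$ and hence to lie in the range of $I_{n^2}-\omega_n$; this route, however, requires algebraic simplicity of the top eigenvalue, whereas the Weyl-based route gives the cleaner unconditional statement recorded in the theorem.
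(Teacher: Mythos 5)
Your proposal is correct and follows essentially the same route as the paper: Weyl's majorant inequality between eigenvalues and singular values, $|\lambda_1(F_n)|=1$ from the CPTP property, and the almost-sure singular-value bounds from Theorems \ref{thm:bound-s1} and \ref{thm:bound-s2}. The only cosmetic difference is that you invoke the multiplicative form $|\lambda_1\lambda_2|\le s_1 s_2$ directly, whereas the paper writes the additive $p$-th-power form $1+|\lambda_2|^p\le s_1^p+s_2^p$ and then sends $p\to 0$, which recovers exactly the same product bound. Your closing remark is also sound: since $\langle I_n|F_n=\langle I_n|$, any right eigenvector with eigenvalue $\mu\neq 1$ lies in $\mathrm{ran}(I_{n^2}-\omega_n)$, giving the sharper estimate $|\lambda_2(F_n)|\le\|F_n(I_{n^2}-\omega_n)\|_\infty\le g_{k,1}+o(1)$ whenever the eigenvalue $1$ is simple; the paper's Weyl-based bound is weaker but does not need that non-degeneracy hypothesis.
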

\begin{proof}
Using Weyl's Majorant Theorem \cite[Theorem II.3.6]{bhatia1997matrix}, we have, for all $p>0$,
$$1 + |\lambda_2(F)|^p \leq s_1(F)^p + s_2(F)^p.$$
In our case, we use the asymptotic almost sure upper bounds ($\lambda=1$ below)
\begin{align*}
    \limsup_{n \to \infty} |s_1(F_n)| &\leq  \sqrt{\lambda + \frac 1 k - \frac{\lambda}{k^2}} +  g_{k,\lambda} > 1\\
    \limsup_{n \to \infty} |s_2(F_n)| &\leq  g_{k,\lambda} < 1.
\end{align*}
To get the best bound, we need to choose $p \to 0$. With this choice, the conclusion follows after using 
$$\lim_{p \to 0} \left( a^p + b^p - 1\right)^{1/p} = ab.$$
\end{proof}
\begin{remark}
Using the asymtptotic behaviour of $g_{k,\lambda}$ from Remark \ref{rem:g-k-lambda-asymptotics-k-large}, we have, in our present case, the following asymptotic lower bound for the spectral gap:
$$\liminf_{n \to \infty} 1-|\lambda_2(F_n)| \gtrsim 1-\frac{8}{\sqrt k}.$$
\end{remark}

We plot in Figure \ref{fig:UB-ev2} the upper bound from the theorem above, as a function of $k$. We observe numerically that the bound becomes smaller than 1 for $k \approx 168.5$.

\begin{figure}[ht]
\centering
\includegraphics[scale=.6]{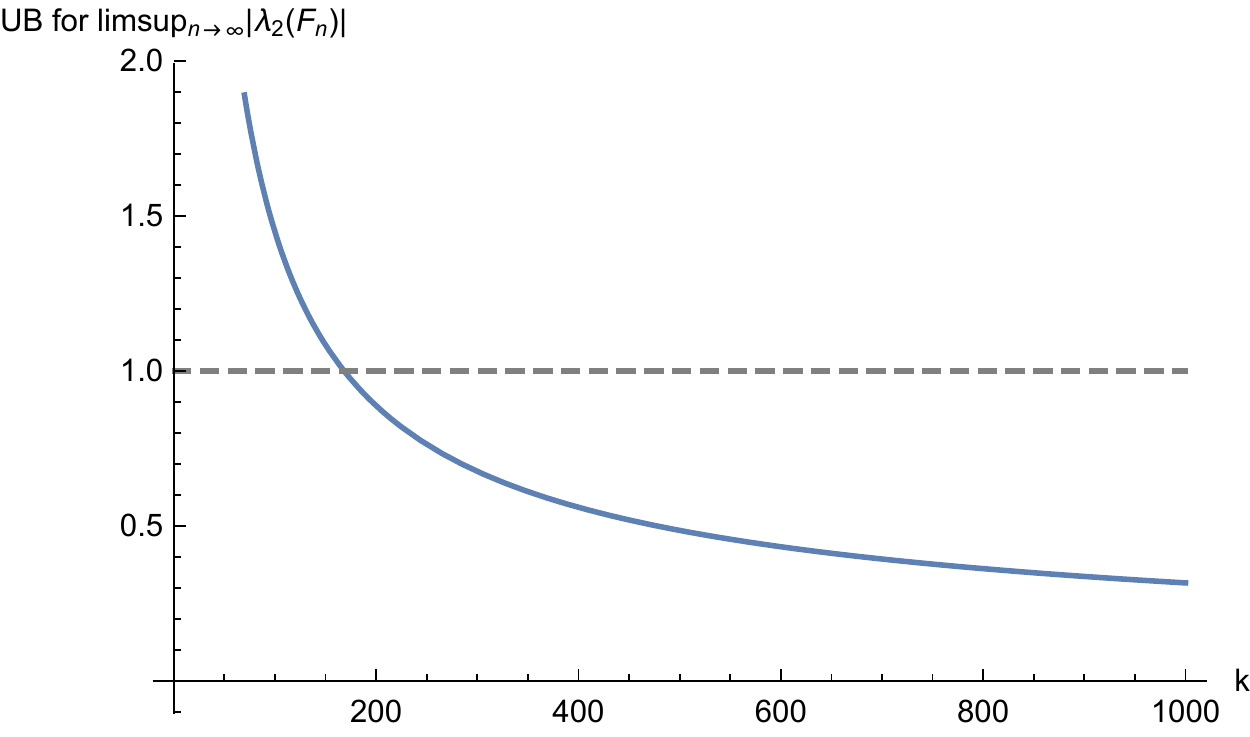}
\caption{Plot of the asymptotic upper bound on the absolute value of the second eigenvalue of the super-operators $F_n$.}
\label{fig:UB-ev2}
\end{figure}

\section{Random channels are quantum expanders}\label{sec:quantum-expander}

Inspired from classical combinatorial theory and computer science \cite{hoory2006expander}, quantum expanders have been introduced independently in \cite{ben2008quantum} and \cite{hastings2007entropy}. We follow here the more general definition suggested in \cite{hastings2007random}, which allows for non-unital quantum channels. 

\begin{definition}\label{def:quantum-expander}
A sequence of quantum channels $\Phi_n : M_n(\CC) \to M_n(\CC)$  is a \emph{quantum expander} if satisfies the following properties: 
\begin{itemize}
    \item $\Phi_n$ have Kraus rank at most $k$
    \item The sequence of second largest (in modulus) eigenvalues of $\Phi_n$ is upper bounded
    \item The (unique) invariant states of $\Phi_n$ have ``large entropy''.
\end{itemize}
\end{definition}

In order to show that a sequence of 
random quantum channels as in Section \ref{sec:random-quantum-channels} is a quantum expander, we just need to give an estimate on the entropy of the fixed point of the random channels (the upper bound on the second largest eigenvalue having been proved in Theorem \ref{thm:gap-ev}). Note that for unital quantum channels, this is not an issue, since the fixed point is the maximally mixed state, which has maximal entropy. In our case, we shall approximate the fixed point by the maximally mixed case, and bound the distance between the two by iterating the quantum channel.

\begin{lemma}\label{lem:approximate-fixed-point-by-identity}
Let $\Phi:M_n(\mathbb C)\to M_n(\mathbb C)$ be a quantum channel having a unique fixed point $\Lambda_\Phi$. Then,  for all $t \geq 1$,
\[\|\Phi^t(\I/n)-\Lambda_\Phi\|_2\leq 2 |\lambda_2(\Phi)|^t,\]
where $\|\cdot\|_2$ denotes the Hilbert-Schmidt norm.
\end{lemma}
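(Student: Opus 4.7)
The natural starting point is to exploit the fixed-point property of $\Lambda_\Phi$ to write
$$\Phi^t(\I/n) - \Lambda_\Phi = \Phi^t(\I/n) - \Phi^t(\Lambda_\Phi) = \Phi^t(\sigma_0), \qquad \text{where } \sigma_0 := \I/n - \Lambda_\Phi.$$
This reduces the statement to estimating $\|\Phi^t(\sigma_0)\|_2$. The matrix $\sigma_0$ is Hermitian with $\Tr\sigma_0 = 0$, and the triangle inequality together with $\|\cdot\|_2 \le \|\cdot\|_1$ yields the clean baseline $\|\sigma_0\|_2 \le \|\sigma_0\|_1 \le \|\I/n\|_1 + \|\Lambda_\Phi\|_1 = 2$, which is the source of the constant $2$ in the final bound.

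Next I would switch to the super-operator representation, writing $\|\Phi^t(\sigma_0)\|_2 = \|F^t \mathrm{vec}(\sigma_0)\|_2$ where $F$ is the super-operator of $\Phi$. Trace preservation of $\Phi$ is equivalent to $F^*\mathrm{vec}(\I) = \mathrm{vec}(\I)$, i.e.~$\mathrm{vec}(\I)$ is a left eigenvector of $F$ for eigenvalue $1$. Since $\langle \mathrm{vec}(\I), \mathrm{vec}(\sigma_0)\rangle = \Tr\sigma_0 = 0$, the vector $\mathrm{vec}(\sigma_0)$ lies in the HS-orthogonal complement $V_0 := \mathrm{vec}(\I)^{\perp}$, which is $F$-invariant. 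By uniqueness of the fixed point, $1$ is a simple eigenvalue of $F$ with $V_0$ as complementary invariant subspace, and consequently the spectrum of $F|_{V_0}$ consists of the remaining eigenvalues; its spectral radius is exactly $|\lambda_2(\Phi)|$.

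The heart of the proof is then to show $\|F^t|_{V_0}\|_\infty \le |\lambda_2(\Phi)|^t$, from which the claim follows immediately by $\|\Phi^t(\sigma_0)\|_2 \le \|F^t|_{V_0}\|_\infty \cdot \|\sigma_0\|_2 \le 2|\lambda_2(\Phi)|^t$. I would approach this via a Riesz/spectral projection decomposition $F = P_1 + F_2$, where $P_1$ projects (non-orthogonally) onto the $1$-eigenspace along $V_0$, so that $F^t = P_1 + F_2^t$ and $F_2^t \mathrm{vec}(\sigma_0) = F^t \mathrm{vec}(\sigma_0)$ because $P_1 \mathrm{vec}(\sigma_0) = \Tr(\sigma_0)\mathrm{vec}(\Lambda_\Phi) = 0$. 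The spectral radius of $F_2$ is $|\lambda_2(\Phi)|$, and Gelfand's formula gives $\|F_2^t\|^{1/t} \to |\lambda_2(\Phi)|$ asymptotically.

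The main obstacle is that for a general (non-normal) super-operator $F$, the bound $\|F_2^t\|_\infty \le |\lambda_2|^t$ can fail at finite $t$ due to Jordan blocks or non-orthogonal spectral projections, and one may only get $\|F_2^t\|_\infty \le C(\Phi)\,t^{m-1}|\lambda_2|^t$ a priori. To obtain the clean constant $2$ uniformly in $t \ge 1$, I would either invoke trace-norm contractivity $\|\Phi^t(\sigma_0)\|_1 \le \|\sigma_0\|_1 \le 2$ to handle small $t$ (where $|\lambda_2|^t$ is close to $1$ and absorbs the polynomial prefactor), while relying on the Gelfand-type exponential decay for large $t$; or alternatively introduce an equivalent Hilbert norm on $V_0$ (obtained from an upper-triangular/Schur decomposition of $F_2$ scaled appropriately) in which $F_2$ acts as a strict contraction with factor $|\lambda_2(\Phi)|$. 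Either route requires care at the interface between the two regimes, which is the subtlest technical point.
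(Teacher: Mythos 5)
Your proposal unfolds, line by line, exactly what the paper's one-line proof asserts: write $\Phi^t(\I/n)-\Lambda_\Phi=\Phi^t(\sigma_0)$ with $\sigma_0:=\I/n-\Lambda_\Phi$ traceless, bound $\|\sigma_0\|_2\le\|\I/n\|_1+\|\Lambda_\Phi\|_1=2$, and then claim
$$\|\Phi^t(\sigma_0)\|_2\le|\lambda_2(\Phi)|^t\,\|\sigma_0\|_2.$$
The paper presents this middle inequality without justification; you are right to flag it as the crux, and your suspicion is well founded. For a non-normal super-operator $F$ the spectral radius of the restriction $F|_{V_0}$ to the trace-zero subspace is indeed $|\lambda_2(\Phi)|$, but this does \emph{not} bound $\|F^t|_{V_0}\|_\infty$ by $|\lambda_2|^t$: a size-$m$ Jordan block at $\lambda_2$ gives $\|F^t|_{V_0}\|\sim\binom{t}{m-1}|\lambda_2|^{t-m+1}$, and even when $F|_{V_0}$ is diagonalizable its eigenbasis is generally not Hilbert--Schmidt orthogonal, so one still picks up the condition number of the eigenbasis. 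Your two proposed repairs do not recover the stated constant $2$ uniformly in $t$ either: trace-norm contractivity only yields $\|\Phi^t(\sigma_0)\|_2\le\|\Phi^t(\sigma_0)\|_1\le 2$, which is \emph{weaker} than $2|\lambda_2|^t$ whenever $|\lambda_2|<1$, so it cannot absorb a polynomial prefactor; and a Schur re-norming of $V_0$ produces a $\Phi$-dependent constant $C(\Phi)$ in front of $(|\lambda_2|+\epsilon)^t$, not the absolute constant $2$.

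In short, you have correctly located a genuine gap, which is present in the paper's own proof as well and which your plan does not close. The natural fix is to replace $|\lambda_2(\Phi)|$ in the statement by the contraction factor of the restriction, $\|F(I_{n^2}-\omega_n)\|_\infty=\|F|_{V_0}\|_\infty$; then $\|F^t\mathrm{vec}(\sigma_0)\|_2\le\|F|_{V_0}\|_\infty^t\,\|\sigma_0\|_2\le 2\,\|F|_{V_0}\|_\infty^t$ follows from submultiplicativity alone, with no normality assumption. This is in fact the quantity the paper controls almost surely in Theorem~\ref{thm:bound-s2}, and it upper-bounds $|\lambda_2(\Phi)|$, so the downstream uses in Theorem~\ref{thm:entropy} and Theorem~\ref{tracerho2} (where only some explicit decaying bound with a bounded prefactor is needed, and where $t$ is taken to grow slowly with $n$) go through unchanged once the lemma is stated in terms of $\|F|_{V_0}\|_\infty$. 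As written with $|\lambda_2(\Phi)|$, the lemma requires either normality of $F|_{V_0}$ or a $\Phi$-dependent constant, neither of which your argument nor the paper's supplies.
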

\begin{proof}
The claim follows from the following:
 $$   \|\Phi^t(\I/n)-\Lambda_\Phi\|_2 =  \|\Phi^t(\I/n)-\Phi^t(\Lambda_\Phi)\|_2 \leq |\lambda_2(\Phi)|^t\|\I/n-\Lambda_\Phi\|_2 \leq 2 |\lambda_2(\Phi)|^t.$$
\end{proof}
We can now prove the main result of this section, that sequences of random quantum channels as in \eqref{eq:asymptotic-regime} are \emph{quantum expanders} in the sense of Definition \ref{def:quantum-expander}. Note that in the following theorem, we can assume that the fixed point of a sequence of random quantum channels is unique, see \cite[Theorem 4.4]{nechita2012random}.

\begin{theorem}\label{thm:entropy}
Let $k\geq 169$ be a fixed integer, and consider a sequence $\Phi_n:M_n(\mathbb C)\to M_n(\mathbb C)$ of random quantum channels as in \eqref{eq:asymptotic-regime}. Then, if  $\Lambda_n$ is the unique fixed point of $\Phi_n$,  
$$S(\Lambda_n) \geq \log k - o(1),$$
except with probability exponentially small in $n$.
\end{theorem}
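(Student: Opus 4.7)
The plan is to show that the fixed point $\Lambda_n$ is $O(1/\sqrt{n})$-close to the maximally mixed state $I_n/n$ in Hilbert--Schmidt norm with exponentially high probability. From this, the elementary R\'enyi-2 bound $S(\rho) \geq -\log \Tr(\rho^2)$ will yield $S(\Lambda_n) \geq \log n - O(1)$, which is even stronger than the claimed $\log k - o(1)$. The two key ingredients are a second-moment bound on the one-step drift and the operator-norm contraction of $\Phi_n$ on the traceless subspace coming from Theorem~\ref{thm:bound-s2}.

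First, I bound the ``one-step drift'' $E_n := \Phi_n(I_n/n) - I_n/n$, which lies in the traceless subspace $T$ of $M_n(\CC)$. A direct Weingarten computation analogous to that of Proposition~\ref{prop:limit-overlap} (using the second moments of $V_n V_n^*$ for a Haar-distributed isometry $V_n$, together with the swap-trick $\Tr[\rho_n^2]=\Tr[(\rho\otimes\rho)(F_n\otimes I_{kk})]$) yields
$$\E \|E_n\|_2^2 = \frac{k-1}{k^2 n} + O(n^{-2}).$$
The map $V_n \mapsto \|E_n\|_2$ is $(2/n)$-Lipschitz, since the partial trace is contractive in HS norm and $V_n$ has operator norm $1$. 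L\'evy's lemma for the Haar measure on $\mathcal{U}_{nk}$ then upgrades this moment estimate to $\|E_n\|_2 \leq C_1/\sqrt{n}$ except with probability $e^{-cn}$.

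Second, I exploit the contraction of $\Phi_n$ on $T$: trace preservation gives $\Phi_n(T) \subseteq T$, and $\|\Phi_n|_T\|_{2\to 2} = \|F_n(I_{n^2} - \omega_n)\|_\infty$. By Theorem~\ref{thm:bound-s2} (with $\lambda=1$), combined with L\'evy's lemma applied to the $O(\sqrt{k})$-Lipschitz function $V_n \mapsto \|F_n(I_{n^2} - \omega_n)\|_\infty$, this norm is at most $g_{k,1} + o(1) < 1$ for $k \geq 169$ (where $g_{169,1}\approx 8/\sqrt{169} \lesssim 0.7$), except with probability $e^{-cn}$. The telescoping identity
$$\Phi_n^t(I_n/n) - I_n/n = \sum_{s=0}^{t-1} \Phi_n^s(E_n)$$
converges as $t \to \infty$ to $\Lambda_n - I_n/n$ in HS norm: writing $I_n/n = \Lambda_n + (I_n/n - \Lambda_n)$ with $(I_n/n-\Lambda_n)\in T$, the contraction on $T$ sends the second summand to $0$ under iteration, which also guarantees the uniqueness of the fixed point. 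Termwise estimation then gives
$$\|\Lambda_n - I_n/n\|_2 \leq \frac{\|E_n\|_2}{1 - g_{k,1}} = O(1/\sqrt{n}),$$
with the same exponentially small failure probability.

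To conclude: $\Lambda_n - I_n/n$ is traceless and hence HS-orthogonal to $I_n/n$, so $\|\Lambda_n\|_2^2 = 1/n + \|\Lambda_n - I_n/n\|_2^2 = O(1/n)$, and the R\'enyi-2 bound yields $S(\Lambda_n) \geq -\log \|\Lambda_n\|_2^2 \geq \log n - O(1) \geq \log k - o(1)$. The main delicate point is upgrading the polynomial-probability statement of Theorem~\ref{thm:bound-s2} to exponential tails; this follows cleanly from L\'evy's lemma applied to the Lipschitz norm function, rather than from the moment/Borel--Cantelli argument used there. The remaining ingredients---the second-moment Weingarten computation for $E_n$ and the geometric-series bound---are routine.
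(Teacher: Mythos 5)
Your argument is correct but takes a genuinely different and arguably cleaner route than the paper's, and in fact proves the stronger statement $S(\Lambda_n)\geq \log n - O(1)$.

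The paper iterates the channel $t$ times with $t = t_n$ growing slowly, uses Lemma~\ref{lem:approximate-fixed-point-by-identity} to control $\lvert\tr(\Phi_n^t(I/n)^2)-\tr(\Lambda_n^2)\rvert$ in terms of $\lvert\lambda_2(\Phi_n)\rvert^t$, and then appeals to the moment computation of the reduced density matrix of a random MPS in \cite{Collins2013} to estimate $\E\,\tr(\Phi_n^t(I/n)^2)$ directly, concluding $\tr(\Lambda_n^2)\lesssim 1/k$. You instead control the \emph{one-step drift} $E_n := \Phi_n(I/n) - I/n$ (a traceless matrix) in second moment, and combine this with the contraction coefficient $\|F_n(I_{n^2}-\omega_n)\|_\infty \leq g_{k,1}+o(1)<1$ on the traceless subspace to bound $\|\Lambda_n - I/n\|_2 \leq \|E_n\|_2/(1-g_{k,1})$ by the telescoping/geometric-series argument. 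I verified your Weingarten computation: for a rank-$n$ random projection $P=V_nV_n^*$ in $M_{nk}$ one finds $\E\tr[(\Tr_k P)^2] = \bigl(n^3(k^2+k-1) - nk\bigr)/(n^2k^2-1)$, which yields $\E\|E_n\|_2^2 = (k-1)/(k^2 n) + O(n^{-3})$, matching your claim. Your approach is self-contained (it does not lean on \cite{Collins2013}), gives the stronger bound $\tr(\Lambda_n^2)=O(1/n)$ rather than $O(1/k)$, and as a byproduct gives a clean proof of uniqueness of the fixed point (any fixed point $\Lambda'$ satisfies $\Lambda'-\Lambda_n \in T$ and $\|\Lambda'-\Lambda_n\|_2 \leq g_{k,1}\|\Lambda'-\Lambda_n\|_2$, forcing equality). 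Two small remarks: the HS operator norm of the partial trace is $\sqrt{k}$, not $1$, so the Lipschitz constant of $V\mapsto\|E_n\|_2$ is $O(\sqrt{k}/n)$ rather than $2/n$; this is harmless since $k$ is fixed. And the upgrade of Theorem~\ref{thm:bound-s2} to an exponential tail does work as you say, but needs the extra observation that $\E\|F_n(I-\omega_n)\|_\infty \leq (1+o(1))\,g_{k,1}$, which follows from the theorem by taking $p = p_n$ growing slowly (e.g.\ $p\sim\log^2 n$, so that $n^{2/p}\to 1$ and $\sqrt{p/n}\to 0$) before applying L\'evy's lemma to the $O(\sqrt{k})$-Lipschitz function $V\mapsto\|F_n(I-\omega_n)\|_\infty$.
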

\begin{proof}
Applying Jensen's inequality we get that for any quantum state  $\rho\in M_n(\CC)$
\[S(\rho)=-\tr(\rho\log(\rho))\geq -\log\tr(\rho^2).\]
We have
\begin{align}
    \nonumber \left|\tr[(\Phi_n^t(I/n))^2]-\tr[(\Lambda_n)^2)]\right|&\leq \|(\Phi_n^t(I/n))^2-\Lambda_n^2\|_1\\
    \label{eq:approxfixpoint}&\leq \|\Phi_n^t(I/n)-\Lambda_n\|_1 \|\Phi_n^t(I/n)-\Lambda_n\|_\infty\\
    \nonumber &\leq 2\|\Phi_n^t(I/n))-(\Lambda_n)\|_1\leq 4 |\lambda_2(\Phi_n)|^t.
\end{align}

Let $V_n$ be the isometry inducing $\Phi_n$ and define $g(V_n)=\tr((\Phi_n^t(I/n))^2)$. Then, $\mathbb{E}(g(V_n))$ can be seen as a particular case of the model introduced in \cite{Collins2013} that is the computation we are interested would be $\mathbb{E}(\tr(\rho_1^2))$ there, where the spin chain would have a left boundary condition $L=I$, the bulk is formed of $t$ systems and the left one of them is the accessible one, and right boundary condition is $R=I/n$. Following the same calculations there, we arrive to  
\begin{equation*}
    \mathbb{E}(g(V_n))=1/k+ O(n^{-1/5}),
\end{equation*}
and the Lipschitz constant of $g(V_n)$ is upper bounded by $4t$. Hence, applying a concentration result \cite{ledoux2001concentration}, we get that there exist constants $c'_1$, $c'_2$ such that \[g(V_n)=1/k+ O(n^{-1/5})\]
with probability larger than $1-c'_1e^{-c'_2n^{3/5}/t^2}\geq 1-c'_1e^{-c'_2n^{1/5}}$.\\
Putting this together with eq.~\eqref{eq:approxfixpoint} we get 
\[S(\Lambda_n)\geq -\log\tr(\Lambda_n^2)\geq -\log(4|\lambda_2(\Phi_n)|^t+1/k+ O(n^{-1/5}))\]
with probability $\geq 1-c_1e^{-c_2n^{1/5}}$, where we are taking into account the probability of $\Lambda_n$ being the only fixed point. Choosing $t$ growing with $n$, but not faster than $n^{1/5}$, the inequality from the statement follows. Since the sequence of random quantum channels $\Phi_n$ has fixed Kraus rank $k$ and we have already shown that the second largest eigenvalue (in absolute value) of the super-operators $F_n$ are bounded (see Theorem \ref{thm:gap-ev}), the conclusion follows.
\end{proof}

\section{Principle of maximum entropy for translationally-invariant matrix product states}\label{sec:TIMPS}

In this section we derive a principle of maximum entropy for infinite spin chains, for which the ground state is well-approximated \cite{hastings2007area} by \emph{matrix product states} (MPS) \cite{perez-garcia2006matrix}.  Ensembles of MPS were already considered in \cite{Garnerone2010a}, showing concentration of local observables, and in \cite{Collins2013}, showing concentration of the reduced density matrix with boundary conditions. Here, we focus in the case of infinite 1D translationally-invariant (TI) systems.

A translationally-invariant matrix product state (TI-MPS) is a state where the coefficients are given by a product of matrices in the following way
\[\psi=\sum_{i_1,\dots, i_N} \tr \left[ A_{i_N} 
\cdots A_{i_1} \right] 
e_{i_1}\otimes \cdots \otimes e_{i_N}\]
where $A_i\in \mathcal M_D(\CC)$ are some matrices. There is a close relation between a TI-MPS and the completely positive map $$\Phi(X)=\sum \limits_{i} A_i X A_i^*.$$ acting on $\mathcal M_D$. In fact, if we define the isometries $V = \sum_{i=1}^k A_i \otimes e_i$, there is a one to one correspondence between the set of TI-MPS with bond dimension $D$ and the set of quantum channels acting on $\mathcal M_D$. Moreover, for an infinite TI-MPS given by an isometry $V$ that induces a channel $\Phi$ with a unique fixed point $\Lambda_\Phi$, the reduced density matrix on $l$ sites is given by \[\rho_l=\mathcal E^l(\Lambda_\Phi)\] where   $\mathcal E(X)= V X V^*$ (note the absence of the partial trace operation in the definition of $\mathcal E$). Thus, there is a natural way to take an infinite random TI-MPS, with physical dimension $k$ and bond dimension $D$, by taking a random isometry (acting on $\mathcal M_D$ with $k$ Kraus operators. 

Set $D=d_n=n$ which corresponds to $\lambda=1$. In this case, the set of TI-MPS with bond dimension $D$ and physical dimension k\footnote{Note the different nomenclature with respect to the usual notation in tensor networks, $D$ is usually the bond dimension and $d$ is the physical dimension} has a one to one correspondence with the set of isometries introduced with fixed $k$. In this case the local tensors of the MPS are exactly the Kraus operators of the channel $\Phi$. 

For a fixed physical dimension $k$ and a fixed bond dimension $D$ we consider the ensemble of infinite TI-MPS induced by the random isometries $V:\CC^D\rightarrow\CC^D\times \CC^k$. In this case the ensemble of reduced density matrices over $l$ systems is given by 
\begin{equation}\label{eq:def-rho-l}
    \rho_l=\mathcal E^l(\Lambda_\Phi)
\end{equation}
where $\mathcal E(X)= V X V^*$, and $\Lambda_\Phi$ is the fixed point of $\Phi(X)=[\id_D \otimes \Tr_k](VXV^*)=\sum_{i=1}^k A_i X A_i^*$.

\begin{theorem} \label{tracerho2}
Let $k\geq 169$ and let $\rho_l$ be taken at random from the ensemble introduced above, such that $D\geq (t+l)^5$ and $t=t_D$ grows slower than $D^{1/5}$. Then
$\tr(\rho_l^2) \leq 1/k^l+ O(D^{-1/5}) $ except with probability exponentially small in $D$.
\end{theorem}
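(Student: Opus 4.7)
The plan is to mirror the proof of Theorem~\ref{thm:entropy}: introduce a proxy for $\Lambda_\Phi$ obtained by iterating $\Phi$ on $\I/D$, compute the purity of the resulting MPS marginal in expectation via the Weingarten calculus of~\cite{Collins2013}, and control the approximation error using the spectral gap from Theorem~\ref{thm:gap-ev}.

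First, define the proxy
$$\tilde\rho_l := \tr_D\bigl[\mathcal E^l(\Phi^t(\I/D))\bigr].$$
Since conjugation by an isometry preserves the trace norm and the partial trace is trace-norm contractive,
$$\|\rho_l - \tilde\rho_l\|_1 \leq \|\mathcal E^l(\Lambda_\Phi - \Phi^t(\I/D))\|_1 = \|\Lambda_\Phi - \Phi^t(\I/D)\|_1 \leq \sqrt D\,\|\Lambda_\Phi - \Phi^t(\I/D)\|_2 \leq 2\sqrt D\,|\lambda_2(\Phi)|^t,$$
by Lemma~\ref{lem:approximate-fixed-point-by-identity}. Using $\rho_l^2 - \tilde\rho_l^2 = \rho_l(\rho_l-\tilde\rho_l)+(\rho_l-\tilde\rho_l)\tilde\rho_l$ and H\"older for the trace yields $|\tr(\rho_l^2)-\tr(\tilde\rho_l^2)| \leq 4\sqrt D\,|\lambda_2(\Phi)|^t$. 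For $k\geq 169$, Theorem~\ref{thm:gap-ev} guarantees that $|\lambda_2(\Phi_n)|$ is almost surely bounded by some absolute constant $c<1$ (except on an event exponentially small in $n$), so taking $t_D$ growing at least like a large multiple of $\log D$ (which is comfortably slower than $D^{1/5}$) makes this replacement error $O(D^{-1/5})$.

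Next, $\tr(\tilde\rho_l^2)$ is precisely the purity of an $l$-site window of a random TI-MPS of total length $l+t$ with left boundary $L=\I$ and right boundary $R=\I/D$, the last $t$ sites being traced out; this is exactly the setting of the concentration model studied in~\cite{Collins2013}, generalized from a single accessible site to a block of $l$ consecutive accessible sites. Redoing their Weingarten computation in this slightly broader setting, using the hypothesis $D\geq (t+l)^5$ to absorb sub-leading corrections, yields
$$\mathbb E[\tr(\tilde\rho_l^2)] = \frac{1}{k^l} + O(D^{-1/5}).$$
The function $V\mapsto g(V):=\tr(\tilde\rho_l^2)$ is a polynomial of degree $O(l+t)$ in the entries of $V$ and $\overline V$, and hence has Lipschitz constant $O(l+t)$ on the isometry manifold equipped with the Hilbert--Schmidt metric. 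A standard Gaussian-type concentration inequality~\cite{ledoux2001concentration} then gives $|g(V)-\mathbb E g(V)| \leq D^{-1/5}$ except with probability at most $c_1\exp(-c_2 D^{3/5}/(l+t)^2)$, which is exponentially small in $D$ under the hypothesis $t+l = o(D^{1/5})$.

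Combining the three bounds, together with the (exponentially small in $n=D$) probability from Theorem~\ref{thm:gap-ev} that $|\lambda_2(\Phi_n)|$ fails to be bounded away from $1$, yields the stated inequality. The main obstacle is the middle step: the Weingarten bookkeeping from~\cite{Collins2013} has to be redone with $l$ accessible sites instead of one, and one has to verify that the underlying geodesic inequalities still pay for the $l$ extra physical-dimension factors so that the $1/k^l$ leading term emerges cleanly while all non-geodesic contributions stay in the $O(D^{-1/5})$ remainder; the hypothesis $D\geq (t+l)^5$ is exactly what ensures this.
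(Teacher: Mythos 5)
Your proposal mirrors the paper's own proof essentially step for step: you introduce the same proxy $\tilde\rho_l = \mathcal E^l(\Phi^t(\I/D))$, control $|\tr(\rho_l^2)-\tr(\tilde\rho_l^2)|$ via Lemma~\ref{lem:approximate-fixed-point-by-identity} and the H\"older bound, invoke the Weingarten computation of \cite{Collins2013} to get $\E[\tr(\tilde\rho_l^2)]=1/k^l+O(D^{-1/5})$ with Lipschitz constant $O(t+l)$, and finish with the Ledoux-style concentration estimate $1-c_1e^{-c_2D^{3/5}/(t+l)^2}$. The only (cosmetic) differences are that you make the partial trace $\tr_D$ explicit in the definition of $\tilde\rho_l$ and write the approximation-error chain a bit more cleanly, keeping the $\sqrt D$ factor; both of these agree in substance with the paper's argument.
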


\begin{proof}
The idea of the proof is to approximate $\rho_l=\mathcal E^l(\Lambda_\Phi) $ by $\tilde\rho_l=\mathcal E^l(\Phi^t(\I/D))$ and that way we avoid the non explicit algebraic dependency between the channel $\mathcal E$ and the fixed point $\Lambda_\Phi$ of channel $\Phi$.
Indeed, using Lemma \ref{lem:approximate-fixed-point-by-identity} we have \[|\tr (\rho^2_l)- \tr(\tilde\rho_l^2)|\leq 2l \|\Lambda_\Phi- \Phi^t(\I/D)\|_1\leq 2l \sqrt D \|\Lambda_\Phi- \Phi^t(\I/D)\|_2\leq 2 l \lambda_2^t(\Phi)\leq 1/(k^l \sqrt D)\] In order to show the result we will follow \cite{Collins2013} to give a bound for $g(V)=\tr(\tilde \rho_l^2(V))$. Following the calculations done there we have that \begin{equation}\label{eq:rhol}
    \mathbb E[g(V)]=1/k^l+ O(D^{-1/5}),
\end{equation}
and the Lipschitz constant of $g(V)$ is upper bounded by $4(t+l)$.\\
Hence, applying a concentration result, we get that there exist constants $c_1$, $c_2$ such that \[\tr\left(\tilde \rho_l^2(V)\right)=1/k^l+ O(D^{-1/5})\]
with probability $1-c_1e^{-c_2D^{3/5}/(t+l)^2}\geq 1-c_1e^{-c_2D^{1/5}}$.\\
Putting this together with eq.~\eqref{eq:rhol} finishes the proof.

\end{proof}
\begin{figure}[ht]
\centering
\includegraphics[scale=1]{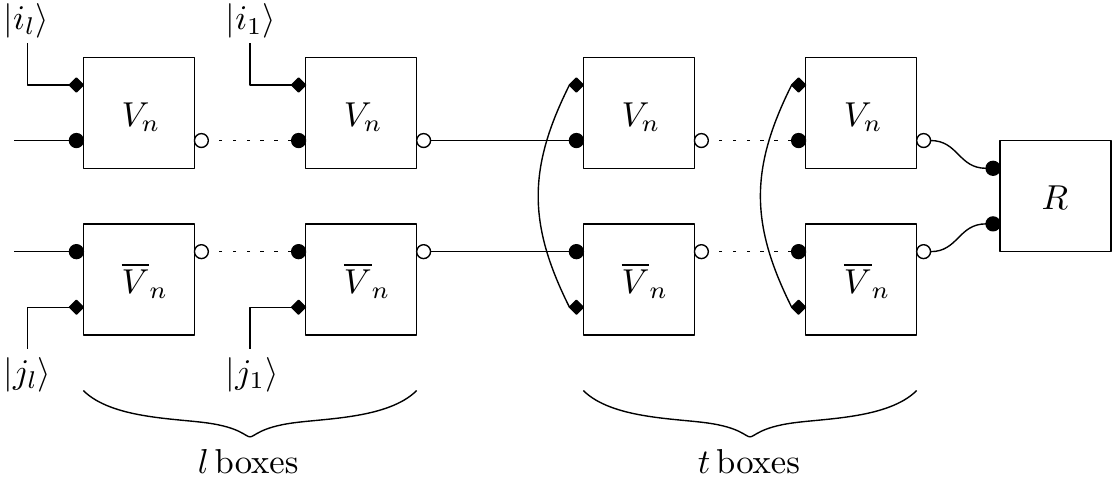}
\caption{A graphical representation of the approximation of a MPS with $R=I/D$ and open boundary conditions on the left.}
\label{fig:rhoapprox}
\end{figure}

In the same way as in \cite{Collins2013} we get two immediate consequences:

\begin{corollary}
Let $k\geq 169$ and let $\rho_l$ be taken at random from the ensemble introduced in \eqref{eq:def-rho-l}, where $l \ll D^{1/5}$. Then, with overwhelming probability as $D \to \infty$,
\[\left\|\rho_l-\frac {\I } {k^l} \right\|_\infty\leq O(D^{-1/10}).\]
\end{corollary}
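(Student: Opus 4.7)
The plan is to reduce the operator-norm bound to the trace-square estimate already established in Theorem \ref{tracerho2}. The key observation is that for any Hermitian $A$ one has $\|A\|_\infty \leq \|A\|_2$, so it suffices to control the Hilbert--Schmidt distance between $\rho_l$ and the maximally mixed state $\I/k^l$ on $\CC^{k^l}$.

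First I would verify that $\tr\rho_l = 1$. This is immediate from the definition $\rho_l = \mathcal E^l(\Lambda_\Phi)$ together with the isometry property $V^*V = I_D$, which gives $\tr\mathcal E(X) = \tr(VXV^*) = \tr(V^*VX) = \tr X$; iterating and using $\tr\Lambda_\Phi = 1$ yields the claim. Expanding the squared Hilbert--Schmidt norm,
\begin{equation*}
    \left\|\rho_l - \frac{\I}{k^l}\right\|_2^2 = \tr(\rho_l^2) - \frac{2}{k^l}\tr(\rho_l) + \frac{1}{k^l} = \tr(\rho_l^2) - \frac{1}{k^l}.
\end{equation*}

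By Theorem \ref{tracerho2}, since $l \ll D^{1/5}$, we may choose an admissible $t = t_D$ growing slower than $D^{1/5}$ with $D \geq (t+l)^5$, so that, except with probability exponentially small in $D$,
\begin{equation*}
\tr(\rho_l^2) - \frac{1}{k^l} \leq O(D^{-1/5}).
\end{equation*}
Taking square roots gives $\|\rho_l - \I/k^l\|_2 \leq O(D^{-1/10})$, and dominating the operator norm by the Hilbert--Schmidt norm completes the proof.

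No serious obstacle is anticipated: the argument is purely a consequence of the Cauchy--Schwarz-type inequality $\|\cdot\|_\infty \leq \|\cdot\|_2$ combined with the elementary identity relating $\tr(\rho^2)$ to the distance from the maximally mixed state. The only point requiring (minor) care is tracking the exponent: a bound of order $D^{-1/5}$ on the variance-like quantity $\tr(\rho_l^2) - 1/k^l$ only yields $D^{-1/10}$ on the operator norm, which matches the statement of the corollary.
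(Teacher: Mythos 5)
Your argument is correct and is exactly the derivation the paper has in mind when it says these corollaries follow ``in the same way as in \cite{Collins2013}'': one expands $\|\rho_l-\I/k^l\|_2^2=\tr(\rho_l^2)-1/k^l$ using $\tr\rho_l=1$, applies Theorem~\ref{tracerho2} to bound this by $O(D^{-1/5})$ with overwhelming probability, and then dominates the operator norm by the Hilbert--Schmidt norm, losing a square root in the rate. The only cosmetic quibble is that $\|\cdot\|_\infty\leq\|\cdot\|_2$ is the monotonicity of Schatten norms rather than a ``Cauchy--Schwarz-type'' inequality, but the mathematics is sound and matches the intended proof.
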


\begin{corollary}
Let $k\geq 169$ and let $\rho_l$ be taken at random from the ensemble introduced in \eqref{eq:def-rho-l}, where $l \ll D^{1/5}$. Then the von-Neumann entropy verifies 
\[S(\rho_l)=l\log k- k^l O(D^{-1/5}),\] 
except with exponentially small probability in $D$.
\end{corollary}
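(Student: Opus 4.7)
The plan is to sandwich $S(\rho_l)$ between the trivial maximum-entropy upper bound and a lower bound coming from the collision (Rényi-$2$) entropy, so that the only real input needed is the already-established Theorem \ref{tracerho2}. Since $\rho_l$ is a quantum state on the $k^l$-dimensional Hilbert space $(\CC^k)^{\otimes l}$, the standard inequality $S(\rho) \leq \log \dim$ gives immediately $S(\rho_l) \leq l\log k$.

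For the matching lower bound I would invoke the Jensen-type comparison $S(\rho) \geq -\log \tr(\rho^2)$, which is the very first inequality used in the proof of Theorem \ref{thm:entropy} and follows from the concavity of $x \mapsto -\log x$ applied to the eigenvalues of $\rho$. On the event of probability $1 - \exp(-\Omega(D^{1/5}))$ supplied by Theorem \ref{tracerho2}, one has $\tr(\rho_l^2) \leq k^{-l} + O(D^{-1/5})$. Factoring out $k^{-l}$ before taking $-\log$ yields
\[
S(\rho_l) \;\geq\; -\log\!\left(\frac{1}{k^l} + O(D^{-1/5})\right) \;=\; l\log k \;-\; \log\!\bigl(1 + k^l \cdot O(D^{-1/5})\bigr),
\]
and the elementary inequality $\log(1+x) \leq x$ for $x \geq 0$ then produces the desired lower bound $S(\rho_l) \geq l\log k - k^l \cdot O(D^{-1/5})$. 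Combined with the trivial upper bound, this gives the claimed asymptotic equality on the same high-probability event.

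Since the result reduces cleanly to Theorem \ref{tracerho2}, I do not foresee any genuine obstacle. The only point requiring a little care is bookkeeping the error term: the factor of $k^l$ multiplying $O(D^{-1/5})$ in the final expression is exactly what comes out of pulling $1/k^l$ past the logarithm, and the hypothesis $l \ll D^{1/5}$ is precisely what is needed for this error term to be $o(1)$, keeping the statement non-vacuous.
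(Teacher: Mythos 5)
Your proof is correct and takes exactly the route the paper has in mind: the upper bound $S(\rho_l)\leq l\log k$ is trivial, and the lower bound follows by combining the Jensen-type inequality $S(\rho)\geq -\log\tr(\rho^2)$ (the same inequality the paper uses in the proof of Theorem~\ref{thm:entropy}) with the purity bound from Theorem~\ref{tracerho2}, after factoring out $k^{-l}$ and using $\log(1+x)\leq x$. This reproduces the claimed error term $k^l O(D^{-1/5})$ exactly, and the observation that $l\ll D^{1/5}$ keeps the error $o(1)$ is a correct and useful sanity check.
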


\bigskip

\noindent {\it Acknowledgments.} C.E.G.G. is supported by Spanish MINECO (projects MTM2014-54240-P and MTM2017-88385-P) and MECD ``Jos\'e Castillejo'' program (CAS16/00339). I.N.'s research has been supported by the ANR projects {StoQ} (grant number ANR-14-CE25-0003-01) and {NEXT} (grant number ANR-10-LABX-0037-NEXT), and by the PHC Sakura program (grant number 38615VA). This project has received funding from the European Research Council (ERC) under the European Union's Horizon 2020 research and innovation programme (grant agreement No 648913). The authors would like to thank the Institut Henri Poincar{\'e} in Paris for its hospitality and for hosting the trimester on ``Analysis in Quantum Information Theory'', during which part of this work was undertaken.

\bibliographystyle{alpha}
\bibliography{ref}

\begin{thebibliography}{PGVWC07}

\bibitem[Aub09]{aubrun2009almost}
Guillaume Aubrun.
\newblock On almost randomizing channels with a short kraus decomposition.
\newblock {\em Communications in mathematical physics}, 288(3):1103--1116,
  2009.

\bibitem[BASTS08]{ben2008quantum}
Avraham Ben-Aroya, Oded Schwartz, and Amnon Ta-Shma.
\newblock Quantum expanders: Motivation and constructions.
\newblock In {\em 23rd Annual IEEE Conference on Computational Complexity},
  pages 292--303. IEEE, 2008.

\bibitem[Bha97]{bhatia1997matrix}
Rajendra Bhatia.
\newblock {\em Matrix analysis}, volume 169.
\newblock Springer Science \& Business Media, 1997.

\bibitem[BS10]{bai2010spectral}
Zhidong Bai and Jack~W Silverstein.
\newblock {\em Spectral analysis of large dimensional random matrices},
  volume~20.
\newblock Springer, 2010.

\bibitem[BZ06]{bengtsson2006geometry}
Ingemar Bengtsson and Karol Zyczkowski.
\newblock {\em Geometry of quantum states: an introduction to quantum
  entanglement}.
\newblock Cambridge University Press, 2006.

\bibitem[CGGPG13]{Collins2013}
Beno{\^i}t Collins, Carlos~E. Gonz{\'a}lez-Guill{\'e}n, and David
  P{\'e}rez-Garc{\'i}a.
\newblock Matrix product states, random matrix theory and the principle of
  maximum entropy.
\newblock {\em Communications in Mathematical Physics}, 320(3):663--677, Jun
  2013.

\bibitem[CN10]{collins2010random}
Beno{\^\i}t Collins and Ion Nechita.
\newblock Random quantum channels {I}: graphical calculus and the {B}ell state
  phenomenon.
\newblock {\em Communications in Mathematical Physics}, 297(2):345--370, 2010.

\bibitem[CN16]{collins2016random}
Benoit Collins and Ion Nechita.
\newblock Random matrix techniques in quantum information theory.
\newblock {\em Journal of Mathematical Physics}, 57(1), 2016.

\bibitem[Col03]{collins2003moments}
Beno{\^\i}t Collins.
\newblock Moments and cumulants of polynomial random variables on
  unitarygroups, the itzykson-zuber integral, and free probability.
\newblock {\em International Mathematics Research Notices}, 2003(17):953--982,
  2003.

\bibitem[C{\'S}06]{collins2006integration}
Beno{\^\i}t Collins and Piotr {\'S}niady.
\newblock Integration with respect to the haar measure on unitary, orthogonal
  and symplectic group.
\newblock {\em Communications in Mathematical Physics}, 264(3):773--795, 2006.

\bibitem[DFHL13]{dupuis2013locking}
Fr{\'e}d{\'e}ric Dupuis, Jan Florjanczyk, Patrick Hayden, and Debbie Leung.
\newblock The locking-decoding frontier for generic dynamics.
\newblock {\em Proc. R. Soc. A}, 469(2159):20130289, 2013.

\bibitem[EHK78]{evans1978spectral}
David~E. Evans and Raphael H{\o}egh-Krohn.
\newblock {Spectral properties of positive maps on {$C\sp*$}-algebras}.
\newblock {\em J. London Math. Soc. (2)}, 17(2):345--355, 1978.

\bibitem[FK10]{fukuda2010entanglement}
Motohisa Fukuda and Christopher King.
\newblock Entanglement of random subspaces via the hastings bound.
\newblock {\em Journal of Mathematical Physics}, 51(4):042201, 2010.

\bibitem[Fri08]{friedman2008proof}
Joel Friedman.
\newblock A proof of {A}lon's second eigenvalue conjecture and related
  problems.
\newblock {\em Mem. Amer. Math. Soc.}, 195(910):viii+100, 2008.

\bibitem[GdOHZ10]{Garnerone2010a}
Silvano Garnerone, Thiago~R. de~Oliveira, Stephan Haas, and Paolo Zanardi.
\newblock Statistical properties of random matrix product states.
\newblock {\em Phys. Rev. A}, 82:052312, Nov 2010.

\bibitem[Has07a]{hastings2007area}
Matthew~B Hastings.
\newblock An area law for one-dimensional quantum systems.
\newblock {\em Journal of Statistical Mechanics: Theory and Experiment},
  2007(08):P08024, 2007.

\bibitem[Has07b]{hastings2007entropy}
Matthew~B Hastings.
\newblock Entropy and entanglement in quantum ground states.
\newblock {\em Physical Review B}, 76(3):035114, 2007.

\bibitem[Has07c]{hastings2007random}
MB~Hastings.
\newblock Random unitaries give quantum expanders.
\newblock {\em Physical Review A}, 76(3):032315, 2007.

\bibitem[Has09]{hastings2009superadditivity}
Matthew~B Hastings.
\newblock Superadditivity of communication capacity using entangled inputs.
\newblock {\em Nature Physics}, 5(4):255--257, 2009.

\bibitem[HHL04]{harrow2004superdense}
Aram Harrow, Patrick Hayden, and Debbie Leung.
\newblock Superdense coding of quantum states.
\newblock {\em Physical review letters}, 92(18):187901, 2004.

\bibitem[HLS05]{hayden2005multiparty}
Patrick Hayden, Debbie Leung, and Graeme Smith.
\newblock Multiparty data hiding of quantum information.
\newblock {\em Physical Review A}, 71(6):062339, 2005.

\bibitem[HLSW04]{hayden2004randomizing}
Patrick Hayden, Debbie Leung, Peter~W Shor, and Andreas Winter.
\newblock Randomizing quantum states: Constructions and applications.
\newblock {\em Communications in Mathematical Physics}, 250(2):371--391, 2004.

\bibitem[HLW06]{hoory2006expander}
Shlomo Hoory, Nathan Linial, and Avi Wigderson.
\newblock Expander graphs and their applications.
\newblock {\em Bulletin of the American Mathematical Society}, 43(4):439--561,
  2006.

\bibitem[HP00]{hiai2000semicircle}
Fumio Hiai and D\'enes Petz.
\newblock {\em The semicircle law, free random variables, and entropy},
  volume~77 of {\em Mathematical Surveys and Monographs}.
\newblock American Mathematical Society, 2000.

\bibitem[HW08]{hayden2008counterexamples}
Patrick Hayden and Andreas Winter.
\newblock Counterexamples to the maximal $p$-norm multiplicativity conjecture
  for all $p>1$.
\newblock {\em Communications in mathematical physics}, 284(1):263--280, 2008.

\bibitem[Kal02]{kallenberg2002foundations}
Olav Kallenberg.
\newblock {\em Foundations of modern probability}.
\newblock Springer Science \& Business Media, 2002.

\bibitem[Led01]{ledoux2001concentration}
Michel Ledoux.
\newblock {\em The concentration of measure phenomenon}, volume~89.
\newblock American Mathematical Soc., 2001.

\bibitem[NP12]{nechita2012random}
Ion Nechita and Cl{\'e}ment Pellegrini.
\newblock Random repeated quantum interactions and random invariant states.
\newblock {\em Probability Theory and Related Fields}, 152(1-2):299--320, 2012.

\bibitem[PGVWC07]{perez-garcia2006matrix}
David Perez-Garcia, Frank Verstraete, Michael~M Wolf, and J~Ignacio Cirac.
\newblock Matrix product state representations.
\newblock {\em Quantum Inf. Comput.}, 7:401, 2007.

\bibitem[Pis86]{pisier1986probabilistic}
Gilles Pisier.
\newblock Probabilistic methods in the geometry of banach spaces.
\newblock In {\em Probability and analysis}, pages 167--241. Springer, 1986.

\bibitem[Pis12]{pisier2012grothendieck}
Gilles Pisier.
\newblock Grothendieck’s theorem, past and present.
\newblock {\em Bulletin of the American Mathematical Society}, 49(2):237--323,
  2012.

\bibitem[Pis14]{pisier2014quantum}
Gilles Pisier.
\newblock Quantum expanders and geometry of operator spaces.
\newblock {\em Journal of the European Mathematical Society}, 16(6):1183--1219,
  2014.

\bibitem[TV11]{tao2011random}
Terence Tao and Van Vu.
\newblock Random matrices: universality of local eigenvalue statistics.
\newblock {\em Acta mathematica}, 206(1):127, 2011.

\bibitem[Wat18]{watrous2018theory}
John Watrous.
\newblock {\em The Theory of Quantum Information}.
\newblock Cambridge University Press, 2018.

\bibitem[Wer89]{werner1989quantum}
Reinhard~F Werner.
\newblock Quantum states with einstein-podolsky-rosen correlations admitting a
  hidden-variable model.
\newblock {\em Physical Review A}, 40(8):4277, 1989.

\end{thebibliography}
\end{document}